\newtheorem{theorem}{Theorem}[section]
\newtheorem{definition}{Definition}[section]
\newtheorem{lemma}{Lemma}[section]
\newtheorem{proposition}{Proposition}[section]
\newtheorem{assumption}{Assumption}
\newenvironment{proof}{\paragraph{Proof:}}{\hfill$\square$}
\newcommand{\E}{\mathbb{E}}
\newcommand{\var}{\text{var}}
\newcommand{\cov}{\text{cov}}
\newcommand{\cor}{\text{cor}}
\newcommand{\bX}{\mathbf{X}}
\newcommand{\cP}{\mathcal{P}}
\newcommand{\cS}{\mathcal{S}}
\newcommand{\indep}{\raisebox{0.05em}{\rotatebox[origin=c]{90}{$\models$}}}
\definecolor{shadecolor}{gray}{0.9}
\tikzset{every picture/.style={line width=0.75pt}} 
\newlist{Step}{enumerate}{2}
\setlist[Step]{label={{Step \arabic*.}}, leftmargin=*}
\newcommand\circled[1]{%
  \mathpalette\@circled{#1}%
}
\newcommand\@circled[2]{%
  \tikz[baseline=(math.base)] \node[draw,circle,inner sep=2pt] (math) {$\m@th#1#2$};%
}
\newcommand\circledblue[1]{%
  \mathpalette\@circledblue{#1}%
}
\newcommand\@circledblue[2]{%
  \tikz[baseline=(math.base)] \node[draw,circle, fill=blue!20, inner sep=2pt] (math) {$\m@th#1#2$};%
 }
\renewenvironment{abstract}
 {\begin{center}\normalsize\textsc{Abstract}%
 \end{center}\begin{quote}\normalsize}
 {\end{quote}}
\title{Overlap Violations in External Validity}
\author{Melody Huang\thanks{Postdoctoral Fellow, Harvard University, Email: \texttt{melodyhuang@fas.harvard.edu}. \newline The author would like to thank Erin Hartman, Anna Wilke, Harsh Parikh, Lauren Liao, Yaxuan Huang, Diana Lee, and the Miratrix C.A.R.E.S. Lab for their helpful comments and feedback. The author would also like to thank Chris Blattman for generously providing access to the data.}}
\date{}
\begin{document}
\maketitle 
\begin{abstract} 
Estimating externally valid causal effects is a foundational problem in the social and biomedical sciences. Generalizing or transporting causal estimates from an experimental sample to a target population of interest relies on an overlap assumption between the experimental sample and the target population--i.e., all units in the target population must have a non-zero probability of being included in the experiment. In practice, having full overlap between an experimental sample and a target population can be implausible. In the following paper, we introduce a framework for considering external validity in the presence of overlap violations. We introduce a novel bias decomposition, that parameterizes the bias from an overlap violation into two components: (1) the proportion of units omitted, and (2) the degree to which omitting the units moderates the treatment effect. The bias decomposition offers an intuitive and straightforward approach to conducting sensitivity analysis to assess robustness to overlap violations. Furthermore, we introduce a suite of sensitivity tools in the form of summary measures and benchmarking, which help researchers consider the plausibility of the overlap violations. We apply the proposed framework on an experiment evaluating the impact of a cash transfer program in Northern Uganda. 
\end{abstract}

\newpage 
\doublespacing
\section{Introduction}
The credibility revolution has pushed for careful causal identification. Randomized control trials have come to be viewed as the gold standard for estimating causal effects. Within a given experiment, researchers can randomly assign treatment, allowing them to estimate the average treatment effect across an experimental sample with minimal identifying assumptions. However, while experiments provide researchers with the ability to estimate internally valid effects, there is an underlying question as to whether or not the estimated effect is also externally valid. Being able to generalize or transport causal findings beyond an experimental setting is foundational to answering broader research questions. 

In practice, the experimental sample is rarely a representative sample from the target population of interest. Existing literature has outlined the necessary assumptions for estimating externally valid causal effects \citep[e.g.,][]{egami2020elements, kern2016assessing, stuart2011use, imai2008misunderstandings}. To generalize or transport the results from an experimental sample to a target population of interest, researchers often rely on two key assumptions. The first is in the form of a conditional ignorability assumption, in which researchers must assume that they are able to measure a sufficiently rich set of moderators that can account for the confounding effects of selection into the experimental sample (e.g., \citealp{imai2008misunderstandings, cole2010generalizing, olsen2013external}). The second is an overlap assumption, in which researchers assume that there is a non-zero probability that units in the target population can be included in the experimental sample \citep[e.g.,][]{rosenbaum1983assessing, tipton2014generalizable}. 

While recent literature has introduced a variety of different methods and sensitivity analyses to assess robustness to violations of conditional ignorability \citep[e.g.,][]{huang2022sensitivity, nie2021covariate, colnet2021generalizing, dahabreh2019sensitivity, nguyen2017sensitivity}, little work has been done to consider overlap violations in the context of external validity. However, overlap violations are a foundational concern when generalizing or transporting causal effects. In practice, subsets of the target population can have zero probability of being included in the experimental sample. This could occur due to units being difficult to recruit or reach (as is the case in generalizability settings), or as a result of the experiment and the target population being in two different geographic locations (as is the case in transportability settings). 

The goal of this paper is to introduce a framework for researchers to consider the sensitivity in their underlying estimates to overlap violations in external validity. The paper provides three primary contributions. First, we build on the work from \cite{egami2020elements} to clarify what constitutes an overlap violation. We introduce the notion of \textit{transportable populations}, which represents the set of all units that the experimental sample could feasibly be transported to, and formalize that an overlap violation occurs when the target population of interest is not a subset of the transportable population. 

Second, we decompose the bias from an overlap violation into two different components: (1) the proportion of units missing from the transportable population in the target population, and (2) the degree to which omitting units from the transportable population moderates the treatment effect. These two factors characterize the different dimensions of an overlap violation, and provide insight into understanding ways to mitigate overlap violations at the recruitment stage of an experiment. Furthermore, both parameters are standardized and can be upper bounded. 

Finally, we introduce a suite of tools that help researchers calibrate the plausibility of certain overlap violations for a given experiment. We propose numerical summary measures that help quantify the minimum amount of moderation that an omitted subgroup must explain in order for an overlap violation to overturn a research result. Furthermore, we introduce a benchmarking approach, which allows researchers to estimate the bias that would occur from omitting units from the target population, similar to omitting an observed subgroup in the data. These tools allow researchers to incorporate their substantive expertise into the sensitivity framework, and provides a transparent way to report potential sensitivity to overlap violations in a given study.

An outline of the paper follows. In Section 2, we introduce the notation and necessary assumptions for generalizing or transporting causal effects, as well as the running example. Section 3 formalizes what it means for an overlap violation to occur, and introduces a bias decomposition for researchers to understand the different drivers of error from overlap violations. In Section 4, we discuss the different ways in which researchers can calibrate their understanding of plausible overlap violations. Section 5 concludes. Proofs, extensions, and additional discussion are provided in the Appendix. 
\section{Background}
\subsection{Notation and Assumptions} 
Following \cite{buchanan2018generalizing}, we define an experimental sample as an i.i.d. sample of $n$ units, drawn potentially with bias from an infinite super-population. Let $S_i$ be a selection indicator, which takes on a value of 1 when a unit is included in the experimental sample, and 0 otherwise. We define the target population as an i.i.d. sample of $N$ units, drawn randomly with equiprobability from an infinite super-population. Let $\cP$ be defined as the set of indices that correspond to the units in the target population, and $\cS$ as the set of indices corresponding to the units in the experimental sample. 

Let $T_i$ be a binary treatment assignment indicator, where $T_i = 1$ if a unit is assigned to treatment, and 0 otherwise. Throughout the paper, we will assume that treatment is randomly assigned, but the framework can be easily extended for observational settings. Let $Y_i(t)$ represent the potential outcome for unit $i$ under treatment assignment $t \in \{0,1\}$. We will make the standard assumptions of no interference and that treatment is identically administered across all units (i.e., SUTVA), and assume full compliance, such that the observed outcomes $Y_i$ can be written as $Y_i := Y_i(1) T_i + Y_i(0) (1-T_i)$.  

The estimand of interest in this setting is the \textit{target population average treatment effect} (T-PATE): 
$$\tau := \E\big[ Y_i(1) - Y_i(0) \mid S_i = 0\big],$$
which represents the average treatment effect across the target population (i.e., $S_i = 0$). In settings when the experimental sample is a random draw from the infinite super-population, then the difference-in-means estimator can be used as an unbiased estimator for the T-PATE. However, when this is not the case, we must leverage two additional assumptions to identify the T-PATE \citep[e.g.,][]{imai2008misunderstandings, cole2010generalizing, olsen2013external}. The first is a conditional ignorability assumption, which states that there exists a set of moderators--i.e., covariates that drive treatment effect heterogeneity--that can explain that the confounding effects from selection into the experimental sample.
\begin{assumption}[Conditional Ignorability of Treatment Effect Heterogeneity] \label{assum:cond_ign} \mbox{}\\
$$Y_i(1) - Y_i(0) \ \indep \ S_i \mid \mathcal{X}_i$$
where $\mathcal{X}$ represents a set of pre-treatment covariates.
\end{assumption} 

Furthermore, we must also invoke an overlap assumption \citep{rosenbaum1983assessing, westreich2010invited}. 

\begin{assumption}[Overlap] \label{assum:positivity} 
$$0 < P(S_i = 1 \mid \mathcal{X}_i) \leq 1$$
\end{assumption} 

Overlap assumes that conditional on the covariate profiles $\mathcal{X}_i$, all units in the target population have a non-zero chance of being included in the experimental sample.

A common approach to estimating the T-PATE in practice is to use weighted estimators \citep{stuart2011use, olsen2013external, buchanan2018generalizing}:
$$\hat \tau = \frac{1}{n_1} \sum_{i:S_i = 1} \hat w_i Y_i T_i - \frac{1}{n_0} \sum_{i:S_i = 1} \hat w_i Y_i (1-T_i).$$
Common examples of weights that are used in practice include inverse propensity score weights \citep[e.g.,][]{cole2010generalizing, stuart2011use, buchanan2018generalizing}, or balancing weights, which allow researchers to bypass modeling the underlying propensity of selection into the experimental sample, and directly target the distributional difference in the covariate distributions \citep[e.g.,][]{sarndal2003model, hainmueller2012entropy, josey2021transporting, lu2021you, ben2020balancing}. 

Researchers can alternatively choose to model the treatment effect heterogeneity by estimating a model for the individual-level treatment effect across the experimental sample, and using the model to project into the target population \citep{kern2016assessing}. Recent work has also introduced augmented weighted estimators, which allow researchers to simultaneously model both processes, with the advantage of doubly robustness \citep{dahabreh2019generalizing}.

Regardless of which estimation approach researchers choose to employ, to obtain an unbiased estimate of the T-PATE, both Assumption \ref{assum:cond_ign} and \ref{assum:positivity} must hold. While existing literature has introduced different approaches to consider the potential violations to Assumption \ref{assum:cond_ign} \citep[e.g.,][]{huang2022sensitivity, dahabreh2019sensitivity, nguyen2017sensitivity, nie2021covariate}, little work has been done to consider potential sensitivity to violations of overlap. In practice, overlap violations (also referred to as undercoverage, or positivity violations) arise in almost every experimental study \citep{tipton2013improving, tipton2023designing}. For example, medical trials often have an eligibility criteria to participate to minimize potential co-morbidities \citep[e.g.,][]{britton1999threats}. However, this also means that there will be subsets of individuals in the target population who are systematically not represented in experimental studies. Similarly, when researchers wish to transport the results from an experimental site to a separate, target population, contextual differences between the target population and the experimental sample threaten the external validity of the study. Inconveniently, overlap violations cannot be overcome with post-hoc adjustments or collecting more covariate data; as such, violations in overlap can be even more pernicious than violations of conditional ignorability.  

Recent work in the literature has primarily focused on introducing data-driven approaches to re-define the target population to comprise of individuals represented in the experimental sample. For example, estimating overlap weights \citep[e.g.,][]{cheng2022addressing} or trimming \citep[e.g.,][]{crump2009dealing} allows researchers to generalize (or transport) their experimental results to a subset of the target population that comprises of units similar to those in the experiment. While these approaches allow researchers to generalize their results to the set of units in the target population for which overlap holds, it requires altering the underlying estimand of interest from the T-PATE \citep[e.g.,][]{parikh2024we}. 

The goal of this paper is thus to introduce a sensitivity framework to assess how sensitive estimates of the T-PATE are to potential bias from overlap violations. This allows researchers to consider the full target population of interest, without having to re-define their estimand of interest. We will focus explicitly on weighted estimators, though the framework can be flexibly applied, regardless of what estimation approach researchers choose to employ (see Appendix \ref{app:extra_details} for more details). 

\subsection{Motivating Example: Cash Transfer Programs}
To help motivate and illustrate the proposed framework, we will turn to a cash transfer experiment, conducted by \citet{blattman2014generating} across Northern Uganda. The Youth Opportunities Program (YOP) was a large scale development program targeting poor and unemployed young adults in Northern Uganda from 2008 to 2012. In 2008, the government solicited grant proposals for either aid in starting a business or vocational training. Treatment was then randomly assigned among the applications that made it past the screening phases. If units were assigned to treatment, they would receive a one-time, unconditional grant, averaging around \$7,500 per group. Units in the control group were not given any cash. Follow-up surveys were then conducted two years later to measure a variety of different outcomes, such as cash earnings, business profits, hours of vocational training, political participation, etc. The authors found that individuals who received an unconditional cash transfer were 4.5 times more likely to have vocational training and had 42\% higher earnings. Given the strong and persistent positive effects, the authors concluded that the intervention of an unconditional cash transfer was largely a success.

A natural policy question that arises is whether or not the impact of such an intervention would also be as effective had it been applied outside of the experimental sample. In other words, would the estimated impact of the cash transfer generalize to the rest of Northern Uganda? For simplicity of illustration, we will focus on generalizing the impact of two of the outcome measures: hours of vocational training and cash earnings. We define our target population to be the rest of Northern Uganda, using a population-based household survey (Northern Uganda Survey--i.e., NUS). To account for potential moderators, we weight on a set of pre-treatment covariates measured across both the experimental sample and the target population \citep{egami2019covariate}, such as age, gender, education, and durable assets. Within the experimental sample, researchers estimated an increase of 337 hours of vocational training amongst units who received the cash transfer (i.e., around 8-9 times as many hours as those in the control group). Furthermore, individuals in the treatment group made 14,000 more UGX than individuals in the control group. Generalizing these impacts to the rest of Northern Uganda, we see both effects attenuated towards zero, with an estimated increase of 207 hours of vocational training (i.e., around 5-6 times as many hours as those in the control group) and an increase of 7,220 UGX.

\begin{table}[ht]
\centering
\begin{tabular}{lcc}
  \toprule 
Outcome & Within-Site & Weighted  \\ 
  \midrule 
Hours of Vocational Training & 337.46 (16.28) & 207.16 (68.37) \\ 
Cash Earnings$^*$ & 14.00 (3.76) & 7.22 (7.14) \\ 
   \bottomrule
   \multicolumn{3}{l}{\footnotesize  $^*$-000s of UGX (2008); $n=2,005$, $N = 18,041$}
\end{tabular}
\end{table}

While we have estimated the T-PATE using the observed covariate data, a concern that remains is the strong selection effect. More concretely, the experimental sample comprises only of individuals who could have filled out the application and passed a screening phase. In contrast, the target population contains a mixture of individuals who may or may not have the ability or initiative to fill out the grant application, and individuals whose applications would not make it through the screening. As such, there exists an overlap violation, and we will not be able to correctly recover the true T-PATE. This will be true, \textit{even if} researchers had been able to measure all of the latent characteristics that could moderate the treatment effect--i.e., unobserved initiative, familial connections, affinity for entrepreneurship, etc. 
Throughout the paper, we will apply our proposed sensitivity framework to the example to illustrate how researchers can assess sensitivity to overlap violations in their T-PATE estimation. 

\section{Bias from Overlap Violations} \label{sec:bias}

 \subsection{Formalizing Transportable Populations}
In the following subsection, we will formalize what constitutes an overlap violation. We begin by defining \textit{transportable populations}.

\begin{definition}[Transportable Populations] \label{def:transport_pop} \mbox{}\\
The transportable population is defined as the set of all units $i$ in the target population, for which the probability of inclusion, conditional on the minimum separating set of moderators $\mathcal{X}$ in the experimental sample is greater than zero: 
$$\phi_\cS := \{ i \in \mathcal{P}: P(S_i = 1 \mid \mathcal{X}_i) > 0\}.$$
\end{definition} 
Definition \ref{def:transport_pop} clarifies that in all settings, for a given experimental sample, there exists \textit{some} population for which the results may be transported or generalized to. In particular, given $\phi_\cS$, under Assumption \ref{assum:cond_ign} (i.e., $\mathcal{X}$ is fully observed and measured), the experimental sample can always be transported or generalized to recover the average treatment effect, across the \textit{transportable population}: 
$$\tau_{\phi} = \E \left[ Y_i(1) - Y_i(0) \mid S_i=0, i \in \phi_\cS \right].$$
In the context of the running example, the transportable population may consist of all individuals across Northern Uganda who has similar levels of initiative, familial connections, and affinity for entrepreneurship as the units represented in the experimental sample. In other words, the transportable population would not comprise of individuals who were less motivated, or less well-connected. The transportable population is the modified estimand that trimmed estimators or overlap weights will target. 

The set of transportable populations does not depend on researchers fully measuring the set of variables $\mathcal{X}$, though recovering $\tau_\phi$ does. In particular, whether or not we have actually collected information about individual's initiative or familial connections does not affect which population the experimental sample can be feasibly generalized (or transported) to. Overlap violations thus do not depend on the measured covariates, and occur when there is a divergence between the set of feasible transportable populations and the actual target population of interest. The following definition formalizes.

\begin{definition}[Overlap Violation] \label{def:overlap} \mbox{}\\
An overlap violation occurs if the target population $\cP$ is not equal to the transportable populations: $\mathcal{P} \neq \phi_\cS$.
\end{definition} 
In other words, we define an overlap violation as any setting in which there are units in the target population for which the conditional probability of inclusion, given the necessary set of moderators, is zero.\footnote{Recent work in the observational causal inference literature has considered violations of a stronger assumption, known as strong overlap, in propensity scores \citep[e.g.,][]{lei2021distribution}. Strong overlap requires that the probability of selection is strictly bounded by a constant (i.e., there must exist some $0 < \eta \leq 0.5$, where for all $x \in \mathcal{X}$, $\eta < \Pr(S = 1 \mid X = x) < 1-\eta$). While an overlap violation as defined in Definition \ref{def:overlap} implies a violation in strong overlap, the converse is not true. Thus, we view the framework introduced in this paper as distinct from the existing literature on violations in strong overlap.}

Overlap violations encapsulate many concerns that arise in external validity. We detail a few examples below. 

\paragraph{Example: Contextual Shifts.} A common concern in external validity is the presence of contextual shifts between the experimental sample and the target population (i.e., $C$-validity, in the language of \citealp{egami2020elements}). Contextual shifts can occur as a result of changes in geographical regions, institutional differences, time, etc. Recent work by \cite{egami2020elements} formalized that in settings when researchers are concerned about contextual shifts between the experimental sample and the target population, they must leverage a \textit{contextual exclusion restriction} to control for moderators that account for the treatment effect heterogeneity that arise as a result of contextual differences. However, the contextual restriction is violated in settings when a contextual moderator is constant for all units in the experimental sample. Consider the running example. Assume \textit{motivation} is an important moderator that accounts for contextual differences between the experimental sample and the target population in the cash transfer setting. However, everyone who selects into the experimental sample was required to fill out a grant application, and as such, is inherently motivated. Even if researchers were able to measure the latent trait of motivation, there is an overlap problem: the experimental sample only comprises of individuals above a certain baseline level of motivation, whereas individuals in the target population may have varying levels of motivation. As such, the contextual exclusion restriction would fail to hold.\footnote{Overlap violations constitute one type of violation in the contextual exclusion restriction. The contextual exclusion restriction could also be violated in settings when researchers fail to collect the full set of moderators needed. See \citet{huang2022sensitivity} for more discussion on such settings. }

\paragraph{Example: Attrition.} Another common concern when considering the external validity of experiments is attrition. Attrition can also be thought of as an overlap violation. In particular, individuals who leave the experiment are no longer included in the experimental sample. However, if the treatment is going to be deployed across a larger target population, then the set $\phi_\cS$ would exclude units who would drop out of treatment. 

\subsection{Bias Decomposition} 
In this subsection, we decompose the bias from an overlap violation as a function of the proportion of omitted units and an $R^2$ value. To begin, let the minimum separating set of moderators (i.e., the smallest set of moderators needed for Assumption \ref{assum:cond_ign} to hold) be defined as $\mathcal{X} := \{ X, V\}$.  Let $V$ be a binary indicator, where without loss of generality, $V_i = 0$ for all units in the experimental sample, but is a mixture of $\{0,1\}$ in the target population. We will assume that  $V_i \indep X_i$.\footnote{The assumption that $V_i \indep X_i$ is akin to \citet{rosenbaum1983assessing}'s consideration of the \textit{residual} contribution of an omitted variable \citep[e.g.,][]{cinelli2020making, nguyen2017sensitivity, huang2022vbm, huang2023design, hartman2022sensitivity}.} In the context of Definition \ref{def:transport_pop}, the transportable population can then be written as the set of units in the target population, where $V_i = 0$ (i.e., $\phi_\cS := \{ i \in \mathcal{P}: V_i = 0\}$). 

We are interested in decomposing the difference between the average treatment effect across the target population (i.e., $\E(\tau_i \mid S_i = 0)$) and the average treatment effect across the transportable population (i.e., $\E(\tau_i \mid S_i = 0, V_i = 0)$). To do so, we linearly decompose the individual-level treatment effect $\tau_i$ across the target population: 
\begin{equation} 
\tau_i = \hat \alpha + \hat \gamma V_i + u_i \mid S_i = 0,
\label{eqn:tau_lin_decomp}
\end{equation}
where $\hat \alpha := \E(\tau_i \mid V_i = 0, S_i = 0)$ (i.e., the average treatment effect across the transportable population), and $u_i$ is a residual noise term (i.e., $u_i := \tau_i - \hat \alpha - \hat \gamma V_i$). $\hat \gamma$ represents the difference between $\E(\tau_i \mid S_i = 0, V_i = 1)$ and $\E(\tau_i \mid S_i = 0, V_i = 0)$. 

Notably, the linear decomposition in Equation \eqref{eqn:tau_lin_decomp} can be performed without loss of generality--i.e., we are not assuming that the underlying data generating process is linear. The linear decomposition is helpful to re-formulate the bias from an overlap violation in terms of the underlying coefficients. Using Equation \eqref{eqn:tau_lin_decomp}, we can write the bias as: 
\begin{align*} 
\text{Bias}(\hat \tau) &= \E(\tau_i \mid S_i = 0, V_i = 0) - \E(\tau_i \mid S_i = 0) \\
&= \hat \gamma \cdot P(V_i = 1 \mid S_i = 0).
\end{align*} 
We can then leverage the closed-form representation of $\hat \gamma$ to re-parameterize the bias in terms of an $R^2$ value. The following theorem formalizes.

\begin{theorem}[Bias from Overlap Violations] \label{thm:bias} Assume Assumption \ref{assum:cond_ign} (Conditional Ignorability) holds for the set of observed covariates $X_i$ and a binary indicator $V_i$. Assume that $V_i = 0$ for all units in the experimental sample, while the target population comprises of units with $V_i = 0$ and $V_i = 1$. Then, the bias from omitting the units $V_i = 1$ from the experimental sample can be written as follows: 
 \begin{align} 
\text{Bias}(\hat \tau) &= \sqrt{R^2_{\tau \sim V} \cdot \var(\tau_i \mid S_i = 0) \cdot \frac{p}{1-p}} \label{eqn:bias_vartau} \\ 
&=\sqrt{\frac{R^2_{\tau \sim V}}{1-R^2_{\tau \sim V}} \left(1+C_\sigma  \cdot \frac{p}{1-p}\right) p \cdot \var_w(\tau_i \mid S_i = 1)}, \label{eqn:bias}
\end{align} 
where $\var_w(\cdot)$ is defined as a weighted variance, $R^2_{\tau \sim V} := \cor(\tau_i, V_i \mid S_i = 0)^2$, $p := \E(V_i \mid S_i = 0)$, and $C_\sigma := \var(u_i \mid V_i = 1, S_i = 0)/\var(u_i \mid V_i = 0, S_i = 0)$ is a hyperparameter that controls for unobserved heteroskedasticity.
\end{theorem}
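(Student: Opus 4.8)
The plan is to start from the identity established immediately before the theorem, namely $\text{Bias}(\hat\tau) = -\hat\gamma\,p$ (so $\text{Bias}(\hat\tau)^2 = \hat\gamma^2 p^2$), where $\hat\gamma$ is the coefficient on $V_i$ in the linear decomposition \eqref{eqn:tau_lin_decomp} and $p = \E(V_i \mid S_i = 0)$, and then rewrite each factor in terms of the quantities appearing in \eqref{eqn:bias_vartau} and \eqref{eqn:bias}. Since $V_i$ is binary, the population OLS slope in \eqref{eqn:tau_lin_decomp} is $\hat\gamma = \cov(\tau_i, V_i \mid S_i = 0)/\var(V_i \mid S_i = 0)$ with $\var(V_i \mid S_i = 0) = p(1-p)$. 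Substituting, and then replacing the covariance via $R^2_{\tau \sim V} = \cov(\tau_i, V_i \mid S_i = 0)^2 / \bigl(\var(\tau_i \mid S_i = 0)\,p(1-p)\bigr)$, gives $\text{Bias}(\hat\tau)^2 = R^2_{\tau \sim V}\,\var(\tau_i \mid S_i = 0)\,p/(1-p)$, which is \eqref{eqn:bias_vartau}.

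To pass to \eqref{eqn:bias}, I would re-express $\var(\tau_i \mid S_i = 0)$ by the law of total variance, conditioning on $V_i$ within $\{S_i = 0\}$. The between-group term is $\var\bigl(\E[\tau_i \mid V_i, S_i = 0] \mid S_i = 0\bigr) = \hat\gamma^2 p(1-p) = R^2_{\tau \sim V}\,\var(\tau_i \mid S_i = 0)$, using the slope formula just derived. The within-group term is $(1-p)\var(u_i \mid V_i = 0, S_i = 0) + p\,\var(u_i \mid V_i = 1, S_i = 0)$, since on each stratum $\tau_i$ and $u_i$ differ by a constant; writing $\sigma_0^2 := \var(u_i \mid V_i = 0, S_i = 0)$ and invoking the definition of $C_\sigma$, this equals $\sigma_0^2 (1-p)\bigl(1 + C_\sigma\, p/(1-p)\bigr)$. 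Solving the resulting linear equation for $\var(\tau_i \mid S_i = 0)$ yields $\var(\tau_i \mid S_i = 0) = \sigma_0^2 (1-p)\bigl(1 + C_\sigma\, p/(1-p)\bigr) / (1 - R^2_{\tau \sim V})$.

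The step that carries the statistical content, and the one I expect to be the main obstacle, is identifying $\sigma_0^2$ with something computable from the experiment: I need $\sigma_0^2 = \var(u_i \mid V_i = 0, S_i = 0) = \var(\tau_i \mid V_i = 0, S_i = 0) = \var_w(\tau_i \mid S_i = 1)$. This rests on Assumption \ref{assum:cond_ign} (conditional on $X_i$ the distribution of $\tau_i$ is the same for selected and non-selected units, hence in particular within the stratum $V_i = 0$), on the fact that $V_i = 0$ for every unit in the experimental sample, and on $V_i \indep X_i$ (so the covariate distribution of the transportable population $\{S_i = 0, V_i = 0\}$ coincides with the one the feasible weights target); care is needed to pin down exactly which weighting makes $\var_w$ the correct population functional. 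Granting this identity, substituting the expression for $\var(\tau_i \mid S_i = 0)$ from the previous paragraph into \eqref{eqn:bias_vartau}, the $(1-p)$ factors cancel appropriately and one obtains $\text{Bias}(\hat\tau)^2 = \frac{R^2_{\tau \sim V}}{1 - R^2_{\tau \sim V}}\bigl(1 + C_\sigma\, p/(1-p)\bigr)\, p\, \var_w(\tau_i \mid S_i = 1)$, i.e.\ \eqref{eqn:bias}; taking square roots (leaving the sign implicit, as in the statement) finishes the proof. Everything except the reweighting identity is routine algebra.
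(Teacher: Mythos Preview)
Your proposal is correct and follows essentially the same route as the paper: Part~1 of the paper's proof rewrites the bias $-\hat\gamma\,p$ via $\hat\gamma=\cov(\tau_i,V_i\mid S_i=0)/\var(V_i\mid S_i=0)$ to obtain \eqref{eqn:bias_vartau}, and Part~2 uses the law-of-total-variance decomposition of $\var(\tau_i\mid S_i=0)$ together with the identification $\var(\tau_i\mid V_i=0,S_i=0)=\var_w(\tau_i\mid S_i=1)$ (the paper's Lemma~\ref{lemma:weighted_var}, which is exactly the reweighting identity you flag as the main obstacle) to reach \eqref{eqn:bias}. Your ANOVA-style phrasing of the between/within split is algebraically identical to the paper's $\var(\tau_i\mid S_i=0)=\var(u_i\mid S_i=0)+\hat\gamma^2\var(V_i\mid S_i=0)$, and both treatments leave the sign of the bias implicit.
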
 
Equation \eqref{eqn:bias_vartau} in Theorem \ref{thm:bias} highlights that bias from an overlap violation will be driven by three factors: (1) proportion of units omitted (i.e., $p$), (2) how strongly the omission moderates the treatment effect (i.e., $R^2_{\tau \sim V}$), and (3) how much \textit{total} treatment effect heterogeneity there is in the target population. 

In practice, researchers do not know how much total treatment effect heterogeneity there is in the target population (i.e., $\var(\tau_i \mid S_i = 0)$). This is exacerbated by the fact that in the presence of an overlap violation, by omitting units from the experimental sample, we are omitting subsets of the target population for which we are concerned there will be \textit{additional} heterogeneity. Thus, the treatment effect heterogeneity from the experimental sample can, at best, serve as a lower bound for the treatment effect heterogeneity across the target population. Instead, Equation \eqref{eqn:bias} from Theorem \ref{thm:bias} utilizes the fact that we can decompose the total variation in the individual-level treatment effect across the target population as a function of the treatment effect heterogeneity captured in the experimental sample (i.e., $\var_w(\tau_i \mid S_i = 1)$), the parameters $\{R^2_{\tau \sim V}, p\}$, and a hyperparameter $C_\sigma$. Section \ref{subsec:var_tau} provides additional discussion. 

$C_\sigma$ controls for how much additional unexplained heterogeneity there is across the units omitted in the target population (i.e., $V_i = 1$) and the units included (i.e., $V_i = 0$). If $C_\sigma \approx 1$, then this implies that while there may be differences in the treatment effect across $V_i = 0$ and $V_i=1$ units, the amount of variation across both groups is roughly the same. If $C_\sigma > 1$, then this implies that there is more variation across the units that were omitted (i.e., $V_i = 1$) than the units included. For example, in a randomized control trial for a new drug, researchers may be concerned that there is greater variation in how individuals not represented in the medical trial respond to the drug. If $C_\sigma < 1$, this implies that there is less treatment effect heterogeneity across the omitted units than in the included units. For example, in the cash transfer context, we may expect that individuals who are still in school may have a relatively constant impact from receiving treatment on the hours of vocational training, as the number of available hours after attending school is limited. As a result, if researchers are concerned about overlap violations with respect to the subset of the population that is too young to be eligible for the program, we might expect there to be \textit{less} heterogeneity in the treatment effect across the excluded $V_i=1$ units. 

\begin{figure}[t]
\includegraphics[width=\textwidth]{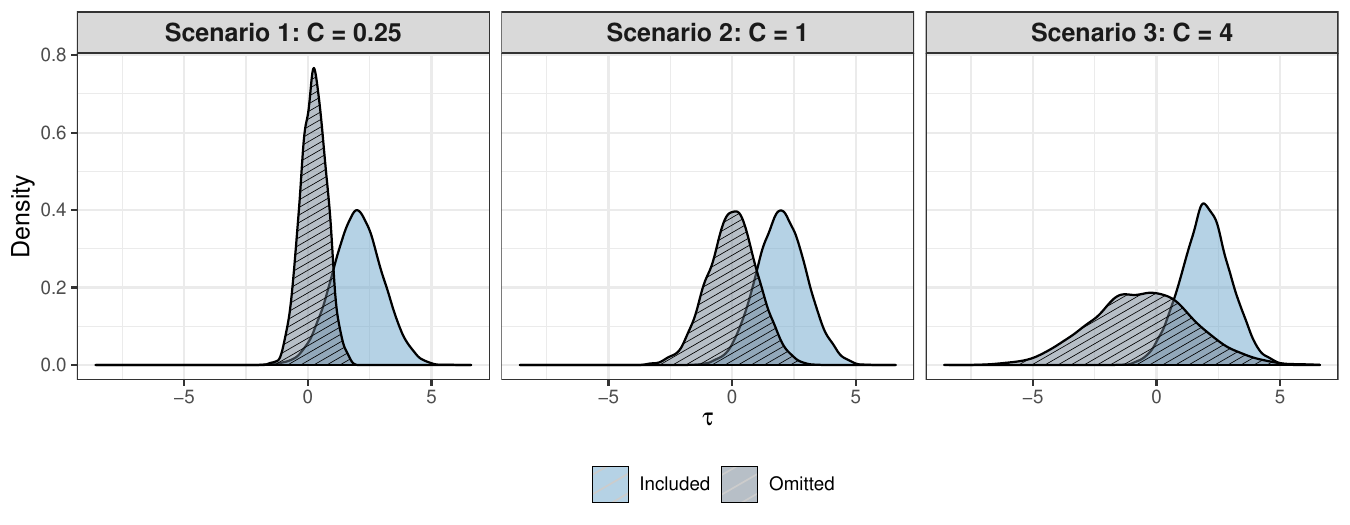}
\caption{Simulated examples comparing the distribution of $\tau_i$ across included ($V_i=0$) and excluded $(V_i=1)$ units. Both $p$ and $R^2_{\tau \sim V}$ are fixed in all three scenarios.}
\label{fig:c_sigma_example}
\end{figure} 

In general, we recommend researchers set $C_\sigma = 1$, and then repeat the analysis for varying values of $C_\sigma$. We provide more discussion on how researchers can calibrate potential values of $C_\sigma$ in Appendix \ref{app:c_sigma}. In general, as we increase $C_\sigma$, this implies there is a greater gap between the total treatment effect heterogeneity ($\var(\tau_i \mid S_i = 0)$) and the treatment effect heterogeneity captured by the experimental sample ($\var_w(\tau_i \mid S_i = 1)$). Figure \ref{fig:c_sigma_example} visualizes simulated distributions of $\tau_i$ with varying $C_\sigma$.

We can apply a Fréchet-Hoeffding bound to construct a sharp upper bound on $\var_w(\tau_i \mid S_i = 1)$ using the observed, empirical cumulative distribution functions of the outcomes (see \citealp{ding2019decomposing} and \citealp{huang2022sensitivity} for more discussion). As a result, the only unobserved parameters in Equation \eqref{eqn:bias} are $R^2_{\tau \sim V}$ and $p$. Thus, to assess the bias from an overlap violation, we can treat $\{R^2_{\tau \sim V}, p\}$ as our sensitivity parameters. 

It is worth noting that Theorem \ref{thm:bias} is not the only way in which researchers can choose to decompose the bias from an overlap violation. For example, \citet{manski1989anatomy} propose assumption-free bounding approaches for attrition, which can be straightforwardly extended for overlap violations (see also \citealp{gerber2012field}). However, there are several advantages to the bias decomposition in Theorem \ref{thm:bias} over alternative approaches. First, it provides a framework to interpret and reason about bias  using standardized parameters, and also allows for a richer way for researchers to encode their substantive priors into the sensitivity parameter (see Section \ref{sec:tools} for discussion). Second, because bounding approaches are completely assumption-free, the resulting range of possible estimates tend to be extremely wide, especially as the proportion of missing units (i.e., $p$) increases. The proposed framework allows for a less conservative view of potential bias that can result from an overlap violation. Finally, the different sensitivity parameters helps theoretically explain the factors that affect bias from overlap violations, which is helpful in considering design decisions to help mitigate overlap violations. 

\paragraph{Remark.} The bias decomposition does not capture cases in which there is a moderating effect from selection itself. In other words, $V_i$ cannot be identical to $S_i$. This could occur if there were a fixed effect from selecting into the experiment that cannot be explained by any of the moderators $X$. In such a setting, $p = 1$ and $R^2_{\tau \sim V}$ = 0. The bias that arises would then be defined as a function of how much larger or smaller the average treatment effect is in the target population, relative to the experimental sample. However, $p=1$ implies that we can no longer leverage information from the experimental sample to inform how large or small the T-PATE is, as \textit{none} of the units in the target population resemble units included in the experimental sample. We provide more discussion in Appendix \ref{app:p1_case}. 

\subsection{Interpreting the Sensitivity Parameters}

\subsubsection{Proportion of Omitted Units} 
The first parameter, $p$, represents the proportion of units in the target population that are not represented by any units in the experimental sample. As $p$ increases towards 1, this implies that a larger portion of units across the target population are excluded, and there is a greater divergence between the transportable population and the target population. 

In many settings, researchers may have a prior for the proportion of units in the target population that have been omitted. This is plausible in settings when researchers have a clear inclusion criteria for the experiment of interest. For example, in the context of the cash transfer program, researchers may be worried about omitting lower educated individuals from the experimental sample. However, as educational attainment is a covariate collected at the population level, researchers can then use the population data to calibrate feasible values of $p$. Furthermore, researchers can often times directly observe overlap violations in the collected covariates. In these settings, it can be useful to fix $p$ to the observed overlap violation, and then perform the sensitivity analysis using just one parameter. We discuss such a setting in Appendix \ref{sec:baseline}.

\subsubsection{Moderating Impact of Omitting Units}
The second sensitivity parameter is an $R^2$ value that represents the amount of treatment effect heterogeneity that can be explained by $V$. More formally, $R^2_{\tau \sim V}$ captures the degree to which $V_i$ moderates the treatment effect. 

In the context of the running example, consider the setting in which $V_i$ represents the set of individuals with little to no familial connections. In particular, individuals with more familial connections in the local government could have a higher chance of having their application pass through the screening process. Then, $R^2_{\tau \sim V}$ would represent the degree to which familial connections moderates the treatment effect. For example, if our outcome of interest is \textit{cash earnings}, we may expect that individuals with a larger number of familial connections to benefit more from receiving a cash transfer, because they will be able to leverage these connections to capitalize on receiving the cash. Individuals who were excluded from the experiment as a result of having a lack of familial connections may see less of a benefit in earnings from receiving the cash. In such a setting, we may expect $R^2_{\tau \sim V}$ to be relatively high (i.e., $R^2_{\tau \sim V} \approx 1$). In contrast, if our outcome of interest is \textit{hours of vocational training}, then we might expect that familial connections does not moderate the treatment effect of receiving a cash transfer by very much. In particular, whether or not individuals are able to receive additional hours of vocational training may not depend on the number of familial connections they have. As a result, in this setting, we would expect $R^2_{\tau \sim V}$ to be relatively low (i.e., $R^2_{\tau \sim V} \approx 0$). 

The magnitude of $R^2_{\tau \sim V}$ will be constrained by how much treatment effect heterogeneity can already be explained by the units in the transportable population (i.e., $\var(\tau_i \mid S_i = 0, V_i = 0)$). The following proposition formalizes. 
\begin{proposition}[Bounds on the Magnitude of $R^2_{\tau \sim V}$] \label{prop:r2_bound}
Consistent with Theorem \ref{thm:bias}, define $R^2_{\tau \sim V}$ as the variation in $\tau_i$ across the target population ($S_i = 0$) that is explained by $V_i$. Then, $R^2_{\tau \sim V}$ is upper bounded by the following: 
\begin{equation} 
R^2_{\tau \sim V} \leq 1- \frac{\overbrace{\var(\tau_i \mid S_i= 0, V_i = 0)}^{\text{Variance in Transportable Pop.}}}{\underbrace{\var(\tau_i \mid S_i = 0)}_{\text{Variance in Target Pop.}}}.
\label{eqn:r2_v}
\end{equation} 
\end{proposition}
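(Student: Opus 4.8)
The plan is to prove the bound through the ANOVA (law-of-total-variance) decomposition of $\var(\tau_i \mid S_i = 0)$ with respect to the binary indicator $V_i$, and then to bound the unexplained (within-group) component from below. First I would note that, since $V_i$ is binary, the population least-squares projection of $\tau_i$ onto $V_i$ within the target population coincides with the saturated group-means model of Equation \eqref{eqn:tau_lin_decomp}: the fitted values are $\E(\tau_i \mid V_i, S_i = 0)$, so the residual is $u_i$ and $R^2_{\tau \sim V} = \cor(\tau_i, V_i \mid S_i = 0)^2$ is exactly the between-group fraction of variance. The law of total variance gives
\[
\var(\tau_i \mid S_i = 0) = \E\!\big[\var(\tau_i \mid V_i, S_i = 0)\mid S_i = 0\big] + \var\!\big(\E(\tau_i \mid V_i, S_i = 0)\mid S_i = 0\big),
\]
and hence $R^2_{\tau \sim V} = 1 - \E\!\big[\var(\tau_i \mid V_i, S_i = 0)\mid S_i = 0\big]\big/\var(\tau_i \mid S_i = 0)$. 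I would verify the identification of the second ("between") term with the squared correlation by the direct computation $\cov(\tau_i, V_i \mid S_i = 0) = p(1-p)\hat\gamma$ and $\var(V_i \mid S_i = 0) = p(1-p)$, where $p := \E(V_i \mid S_i = 0)$.

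The next step expands the within-group term over the two values of $V_i$,
\[
\E\!\big[\var(\tau_i \mid V_i, S_i = 0)\mid S_i = 0\big] = (1-p)\,\var(\tau_i \mid S_i = 0, V_i = 0) + p\,\var(\tau_i \mid S_i = 0, V_i = 1),
\]
and bounds this convex combination from below by $\var(\tau_i \mid S_i = 0, V_i = 0)$. Substituting into the expression for $R^2_{\tau \sim V}$ and rearranging gives the claimed inequality; it holds with equality in the homoskedastic (default) calibration $C_\sigma = 1$ of Theorem \ref{thm:bias}, so the bound is sharp.

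The only delicate point --- and the step I expect to be the main obstacle --- is the lower bound on the within-group variance. The convexity estimate needs $\var(\tau_i \mid S_i = 0, V_i = 1) \ge \var(\tau_i \mid S_i = 0, V_i = 0)$, i.e.\ $C_\sigma \ge 1$; using only $\var(\tau_i \mid S_i = 0, V_i = 1) \ge 0$ yields the weaker $R^2_{\tau \sim V} \le 1 - (1-p)\,\var(\tau_i \mid S_i = 0, V_i = 0)\big/\var(\tau_i \mid S_i = 0)$. I would therefore present the bound under the regime $C_\sigma \ge 1$ (which covers the recommended default and the "omitted units are at least as heterogeneous" case), noting that it is exactly attained at $C_\sigma = 1$ and is informative whenever $\var(\tau_i \mid S_i = 0) > \var(\tau_i \mid S_i = 0, V_i = 0)$. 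The remaining items --- justifying the population $R^2$/correlation identity for a binary regressor and checking the arithmetic of the rearrangement --- are routine.
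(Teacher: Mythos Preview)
Your approach is essentially the same as the paper's. Both rest on the identity
\[
1 - R^2_{\tau\sim V} \;=\; \frac{\var(\tau_i \mid V_i = 0, S_i = 0)\,\bigl[(1-p) + p\,C_\sigma\bigr]}{\var(\tau_i \mid S_i = 0)},
\]
which the paper records as Equation~\eqref{eqn:r2_decomp} in the proof of Theorem~\ref{thm:bias}; your law-of-total-variance expansion of the within-group component is exactly this formula rewritten. The paper's proof then asserts that $R^2_{\tau\sim V}\le 1 - \var(\tau_i \mid V_i=0, S_i=0)/\var(\tau_i \mid S_i=0)$ ``since $C_\sigma \geq 0$ and $p \in [0,1]$,'' with equality at $p=0$.

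You are actually more careful than the paper on the one nontrivial step. The inequality $(1-p) + p\,C_\sigma \geq 1$ is equivalent to $p(C_\sigma - 1) \geq 0$, and for $p>0$ this genuinely requires $C_\sigma \geq 1$, not merely $C_\sigma \geq 0$ as the paper's proof claims. Your identification of the regime $C_\sigma \geq 1$ (with equality at the recommended default $C_\sigma=1$) is correct, and the weaker bound you record for the unrestricted case, $R^2_{\tau\sim V} \leq 1 - (1-p)\,\var(\tau_i \mid V_i=0, S_i=0)/\var(\tau_i \mid S_i=0)$, is what one actually gets without the homoskedasticity-from-below assumption. In short: your proposal matches the paper's argument and, on the key inequality, is more rigorous.
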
 
Proposition \ref{prop:r2_bound} restricts the range of possible values that $R^2_{\tau \sim V}$ can take on. In practice, Equation \eqref{eqn:r2_v} cannot be estimated, as it is a function of the total treatment effect heterogeneity across the target population (i.e., $\var(\tau_i \mid S_i = 0)$). However, Equation \eqref{eqn:r2_v} highlights the importance of sampling carefully when recruiting for an experiment. If researchers are able to recruit an experimental sample that capture an adequate amount of moderation in the treatment effect, then the degree to which an overlap violation can impact the estimates is restricted. New recruitment strategies that help researchers consider treatment effect heterogeneity \textit{a priori}, such as \citet{egamidesigning} and \citet{tipton2021beyond}, can help reduce the amount of bias from overlap violations, and improve the external validity of an experimental study.

\subsection{The Role of Treatment Effect Heterogeneity} \label{subsec:var_tau}
Examining the bias from an overlap violation helps reconcile a paradox in the external validity literature. In previous work, authors have shown that as the treatment effect heterogeneity within the experimental sample increases, the overall robustness of an experimental study's external validity appears to decrease \citep[e.g.,][]{huang2022sensitivity, devaux2022quantifying}. This is counterintuitive, as we would expect more homogenous experimental samples to be less generalizable. 

Instead, from Theorem \ref{thm:bias}, we see that recruiting a homogenous sample restricts the set of possible transportable populations that the experiment can feasibly generalize to. Thus, while we may have more robustness in recovering the average treatment effect across the transportable population, we incur a cost in bias from an overlap violation. This trade-off is clear when examining the decomposition of the total treatment effect heterogeneity across the target population (i.e., $\var(\tau_i \mid S_i = 0)$) used in Theorem \ref{thm:bias}. More concretely, Theorem \ref{thm:bias} leverages the following decomposition: 
\begin{equation} 
\var(\tau_i \mid S_i = 0) = \underbrace{\var_w(\tau_i \mid S_i = 1)}_{\substack{\text{Variation captured} \\ \text{by experiment}}} \cdot \overbrace{\frac{1 + p (C_\sigma - 1)}{1-R^2_{\tau \sim V}}}^{\text{Gap from Overlap}}.
\label{eqn:var_tau_decomp0} 
\end{equation} 

The \textit{total} amount of treatment effect heterogeneity is not something within researcher control, and is inherent to the nature of the intervention and the chosen target population. Because $\var(\tau_i \mid S_i = 0)$ is fixed for a specified target population, reducing the variation captured in the experimental sample (i.e., $\var_w(\tau_i \mid S_i = 1)$) will result in an increase in both $p$ and $R^2_{\tau \sim V}$. Thus, while robustness may ostensibly increase with more homogenous experiments, this is only true in cases when the \textit{overall} treatment effect heterogeneity across the target population is also low. 

\subsection{Summary}
To summarize, the bias from an overlap violation can be assessed from two parameters: $p$, which represents the proportion of units omitted in the target population, and $R^2_{\tau \sim V}$, which represents how much treatment effect heterogeneity we have failed to capture in the experimental sample. As such, researchers can vary $\{p, R^2_{\tau \sim V}\}$ and assess how sensitive the estimated T-PATE is to potential overlap violations. We summarize in Figure \ref{fig:summary}. 

Notably, the bias decomposition introduced in Section \ref{sec:bias} rely on parameters that characterize the relationship between $V_i$, $S_i$, and $\tau_i$, \textit{across the target population}. This is because overlap violation considers settings when $V_i$ takes on a constant value across the experimental sample; as such, we must consider the relationship between $V_i$ and $S_i$ and $\tau$ across the target population. This can be inherently challenging to reason about, as it relies on reasoning about information across units we do not observe. In the following section, we will introduce a set of tools that help summarize sensitivity to potential overlap violations, and aid researchers in reasoning about the sensitivity parameters. 

\begin{figure}[!ht]
\noindent\fbox{%
\vspace{2mm}
\parbox{0.975\textwidth}{%
\vspace{1mm}
\singlespacing 
\begin{Step}
    \item Set the hyperparameter $C_\sigma$. In general, we recommend researchers begin by setting $C_\sigma = 1$, and repeat the analysis with varying values of $C_\sigma$. 
    \item Estimate an upper bound for $\var_w(\tau_i \mid S_i = 1)$. We will denote the upper bound as $\overline{\var}_w(\tau_i \mid S_i = 1)$.
    \item Vary $p$ on the range $[0, 1)$.
    \item Vary $R^2_{\tau \sim V}$ on the range $[0, 1)$. 
    \item Evaluate the bias, using Theorem \ref{thm:bias}.
\end{Step}
}
}
\caption{Summary of sensitivity framework.} \label{fig:summary}
\end{figure} 

\section{Sensitivity Summary Tools} \label{sec:tools}

In the following section, we introduce a suite of sensitivity tools to help researchers calibrate their assessment of sensitivity to overlap violations. In particular, we extend sensitivity statistics for routine reporting from the omitted variable bias literature \citep[e.g.,][]{cinelli2020making, huang2022sensitivity, hartman2022sensitivity} for the overlap setting. The proposed summary measures are important in helping researchers assess not only sensitivity to potential violations in overlap, but also the \textit{plausibility} of such violations.

We define a user-specified threshold $b^*$, such that if the T-PATE were to cross $b^*$, this would imply a substantively meaningful change in the research conclusion. If researchers set $b^* = 0$, this implies that we are concerned about an overlap violation that results in the T-PATE being reduced to zero--i.e., there is no impact from the treatment across the target population. Researchers can choose to set different $b^*$ depending on the substantive context. For example, if researchers are worried about a reduction in the magnitude of the T-PATE (e.g., for a cost-benefit analysis for a policy), they may set $b^*$ to be a proportion of the estimated T-PATE. Alternatively, in settings when researchers are worried that an overlap violation results in a change in the statistical significance of the estimated T-PATE, $b^*$ may be set to a value that corresponds to a null result using a percentile bootstrap procedure (see \citealp{huang2022sensitivity} and \citealp{huang2022vbm} for more discussion).

\subsection{Overlap Robustness Value}
We begin by introducing a numerical summary measure, referred to as the \textit{overlap robustness value} (ORV). The ORV provides one way for researchers to quantify how different the estimated treatment effect across the omitted units must be in order for the estimated T-PATE to be reduced to a threshold $b^*$. More formally, the ORV is defined as follows: 
\begin{equation} 
\text{ORV}_{b^*} = \frac{a_{b^*}}{1+a_{b^*}},\ \ \ 
\text{where } a_{b^*} = \sqrt{\frac{(\hat \tau - b^*)^2}{\overline{\var}_w(\tau_i \mid S_i = 1)}}.
\label{eqn:orv} 
\end{equation}

The $\text{ORV}_{b^*}$ represents the minimum amount of variation $V$ must explain in $\tau_i$ across the target population \textit{and} the minimum proportion of units omitted, in order for the estimated treatment effect to be reduced to a threshold $b^*$. Importantly, $\text{ORV}_{b^*}$ is bounded on an interval 0 to 1. As $\text{ORV}_{b^*}$ increases towards 1, this implies that in order for $\hat \tau$ to be reduced to the threshold $b^*$, (1) the majority of the target population units must be excluded from the experimental sample, and (2) the treatment effect across omitted units is substantially different from the estimated treatment effect across the included units. In contrast, if $\text{ORV}_{b^*}$ is relatively small (i.e., close to 0), then this implies that if there are a few units who are omitted from the transportable population, and there exists even a small amount of treatment effect heterogeneity on the dimension of the omitted subgroup, then this would be sufficient to alter our substantive takeaway. 

Like alternative sensitivity summary measures (e.g., the robustness value in \citealp{huang2022sensitivity} and \citealp{cinelli2020making}, design sensitivity in \citealp{huang2023design}, the $e$-value \citealp{ding2016sensitivity}, to name a few), the overlap robustness value provides a convenient, one number summary for how strong the underlying overlap violation must be to overturn our research conclusion. However, $\text{ORV}_{b^*}$ represents only \textit{one} setting in which an overlap violation overturn our research result. For example, it would not capture settings in which an overlap violation results in a large proportion of units in the target population being omitted from the transportable population (i.e., $p$ is large), but the omission does not moderate very much of the treatment effect (i.e., $R^2_{\tau \sim V}$ is low). Similarly, there may be settings when the proportion of units omitted are relatively low (i.e., $p$ is small), but the omitted units may have a treatment effect that is drastically different than the treatment effect across the included units (i.e., $R^2_{\tau \sim V}$ is large). As such, we recommend that researchers not only report the ORV, but also the other sensitivity summary measures to provide a more holistic understanding for the types of overlap violations that can overturn their research conclusions. 

\subsubsection{Illustration on Example}
We calculate the overlap robustness value for both outcomes of interest. For hours of vocational training, we estimate $\text{ORV}_{b^*=0} = 0.37$. This implies that in order for an overlap violation to reduce the estimated T-PATE to zero, 37\% of the units in the target population would have to be omitted from the transportable population, and the omitted subgroup $V_i$ would have to explain 37\% of the variation in the treatment effect heterogeneity. In contrast, for cash earnings, we estimate $\text{ORV}_{b^*=0}=0.18$. This implies that only 18\% of the units in the target population would have to be omitted, and the omitted subgroup $V_i$ would only have to explain 18\% of the variation in the treatment effect heterogeneity in order to reduce the estimated T-PATE to zero.  

\begin{table}[!ht]
\centering 
    \begin{tabular}{lccc}\toprule 
         &  Within-Site & Weighted & ORV$_{b^*=0}$ \\\midrule 
Hours of Vocational Training & 337.46 (16.28) & 207.16 (68.37) & 0.37 \\ 
Cash Earnings &  14.00 (3.76) & 7.22 (7.14) & 0.18 \\ \bottomrule 
    \end{tabular}
    \caption{Overlap robustness values}
\end{table}

\subsection{Bias Contour Plots}
We now introduce a visual summary tool in the form of bias contour plots. Bias contour plots allow researchers to visually assess how the bias from an overlap violation changes as the sensitivity parameters. Along the $y$-axis, we vary the moderating strength of the omitted subgroup (i.e., $R^2_{\tau \sim V}$) from 0 to 1. Along the $x$-axis, we vary the proportion of omitted units $p$ from 0 to 1. 

We also shade in the region of the plot where the overlap violation is strong enough to reduce the T-PATE to $b^*$. How large or small this shaded region is can be a proxy for how much sensitivity there is in the estimated result to potential overlap violations. 

We provide an example of the bias contour plots for both hours of vocational training and cash earnings (see Figure \ref{fig:contours}). Consistent with the overlap robustness values, we see that the shaded region for cash earnings is substantially larger than hours of vocational training, indicating a larger degree of sensitivity to potential overlap violations. In other words, even a small overlap violation could result in changes in our T-PATE estimate. 

\begin{figure}[!ht]
\includegraphics[width=0.49\textwidth]{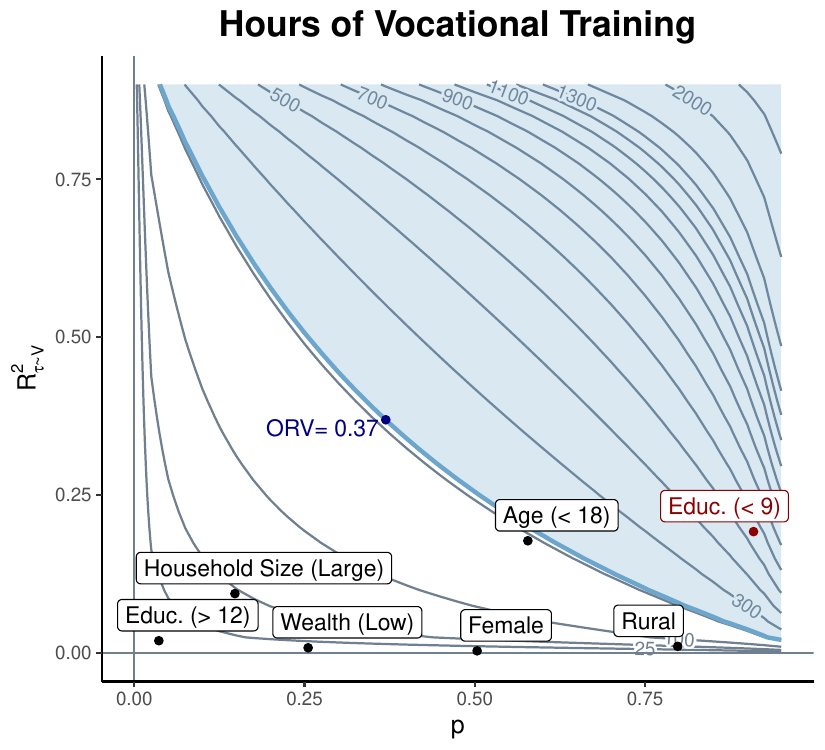}
\includegraphics[width=0.49\textwidth]{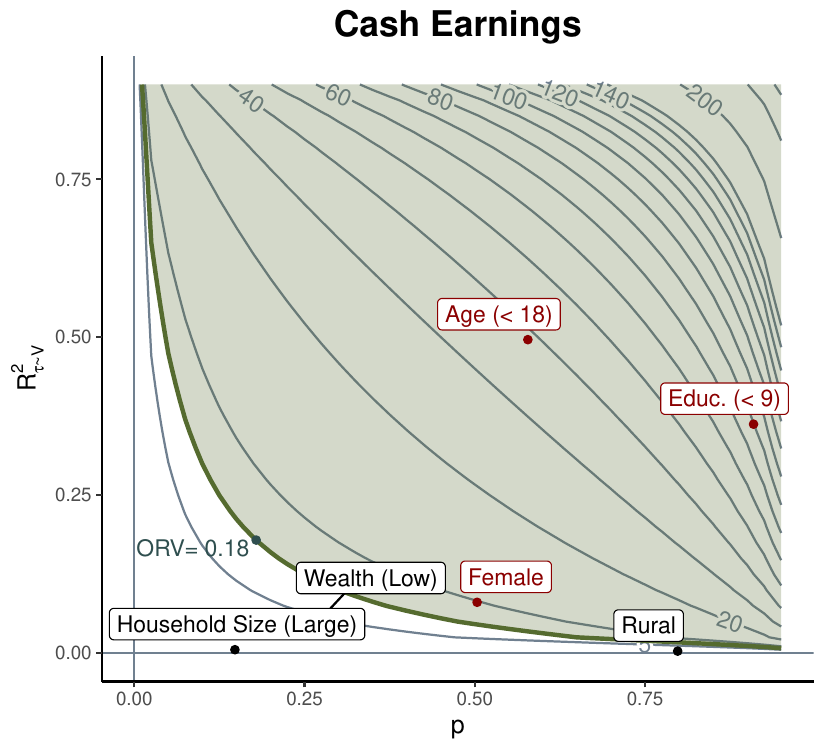}
\caption{Example of Bias Contours, with $C_\sigma = 1$. The shaded region corresponds to the portion of the plot in which the bias is large enough to change the directional sign of the T-PATE estimate. The ORV$_{b^* = 0}$ represents a point along the boundary of the shaded region. We also include the benchmarking results (see Section \ref{sec:benchmark}), which corresponds to the parameters associated with an overlap violation with equivalent strength to omitting the units in the labeled subgroup.}
\label{fig:contours}
\end{figure}

\subsection{Benchmarking}\label{sec:benchmark}
A general challenge in sensitivity analysis is reasoning abut the \textit{plausibility} of bias that is sufficiently large to overturn a research result. While estimating baseline heterogeneity loss provides a helpful reference point for researchers to begin reasoning about heterogeneity loss, it can be difficult to reason about how much additional loss is incurred from an overlap violation. Similarly, the minimum variation explained provides a  helpful summary of how strong an overlap violation must be for our research conclusion to be overturned, but still requires researchers to assess whether or not it is likely that omitting units could result in such an overlap violation. 

To help address these concerns, we introduce benchmarking with observed covariates for the overlap setting. To begin, we define a subgroup $G^{(j)}$, where $j \in \{1, ..., J\}$, and the subgroup $G^{(j)}$ is included in both the experimental sample and the target population. Benchmarking relies on first estimating the relationship between $G_i^{(j)}$ within the experimental sample, and then re-weighting to the target population. Benchmarking allows researchers to estimate the sensitivity parameters associated with an overlap violation that would occur from omitting units, with equivalent strength to omitting the units from $G^{(j)}$. We can equivalently think of benchmarking as defining different transportable populations using the observed subgroups $G^{(j)}$. 

To begin, we introduce a definition of relative strength in overlap violation. 

\begin{definition}[Relative Strength in Overlap Violation]\mbox{}\\
For a subgroup $G^{(j)}$, $j=1, ..., J$, define the relative strength in an overlap violation as follows: 
\begin{align*} 
k_p := \frac{\E(V_i \mid S_i = 0)}{\E( G^{(j)} \mid S_i = 0)} \ \ \ \ \ \ \text{and } \ \ \ \ 
k_\tau := \frac{\cor(\tau_i, V_i \mid S_i = 0)}{\cor(\tau_i, G^{(j)} \mid S_i = 0, V_i = 0)},
\end{align*} 
\end{definition} 
$\{k_p, k_\tau\}$ capture the relationship between the estimated treatment effect heterogeneity and the observed subgroup $G^{(j)}$, relative to the relationship between the individual-level treatment effect and omitted subgroup $V$. If $k_p \geq 1$, this implies that there is a larger proportion of units in the target population missing from the transportable population than if we had omitted the subgroup $G^{(j)}$. If $k_\tau \geq 1$, then this implies that there is a greater degree of moderation from omitting the units from $V$ than from omitting units from an observed subgroup $G^{(j)}$. 

Setting $k_p = k_\tau = 1$, we can estimate the benchmarked sensitivity parameters.

\begin{proposition}[Benchmarked Sensitivity Parameters] \mbox{}\\ 
Assume  $k_p = k_\tau = 1$. Then, under Assumption \ref{assum:cond_ign}, the benchmarked sensitivity parameters can be estimated as follows: 
\begin{align} 
\begin{cases} 
\hat p(j) := \E(G^{(j)} \mid S_i = 0)  \vspace{2mm} \\ 
\hat R^2(j) := \displaystyle \widehat{\cor}_w(\tau_i, G^{(j)} \mid S_i =1)^2.
\end{cases}
\end{align}
\end{proposition}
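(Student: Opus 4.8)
The plan is to read the two benchmarking constraints $k_p = 1$ and $k_\tau = 1$ directly against the definitions of the sensitivity parameters $p$ and $R^2_{\tau \sim V}$ that enter Theorem~\ref{thm:bias}, and then to argue that the resulting transportable-population estimands are recoverable from the experimental sample under Assumption~\ref{assum:cond_ign} together with the maintained assumption $V_i \indep X_i$. For $p$ this is immediate: $p := \E(V_i \mid S_i = 0)$, and imposing $k_p = 1$ in the definition of relative strength forces $\E(V_i \mid S_i = 0) = \E(G^{(j)} \mid S_i = 0)$; since $G^{(j)}$ is measured across the whole target population, the right-hand side is estimated by the empirical subgroup proportion, giving $\hat p(j) = \E(G^{(j)} \mid S_i = 0)$. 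For the moderation parameter, $R^2_{\tau \sim V} := \cor(\tau_i, V_i \mid S_i = 0)^2$, and imposing $k_\tau = 1$ forces $\cor(\tau_i, V_i \mid S_i = 0) = \cor(\tau_i, G^{(j)} \mid S_i = 0, V_i = 0)$, so $R^2_{\tau \sim V} = \cor(\tau_i, G^{(j)} \mid S_i = 0, V_i = 0)^2$. It then only remains to show this correlation over the transportable population equals the weighted correlation $\cor_w(\tau_i, G^{(j)} \mid S_i = 1)$ computed in the experimental sample with the same weights $\hat w_i$ used for the T-PATE estimator.

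This last identification is the core of the argument. The experimental sample consists entirely of $V_i = 0$ units, and $V_i \indep X_i$ implies that the covariate law of the transportable population $\{S_i = 0, V_i = 0\}$ coincides with that of the full target population $\{S_i = 0\}$; hence the weights $\hat w_i$ that rebalance $X$ from the experimental sample onto the target population also rebalance it onto the transportable population. Under Assumption~\ref{assum:cond_ign}, $\tau_i \indep S_i \mid \mathcal{X}_i = \{X_i, V_i\}$, so after reweighting the joint law of $(\tau_i, X_i)$ given $S_i = 1$ equals the joint law of $(\tau_i, X_i)$ given $S_i = 0, V_i = 0$. Because $G^{(j)}$ is a deterministic function of the observed covariates, the weighted joint law of $(\tau_i, G^{(j)}_i)$ matches as well, so every weighted moment transfers — in particular the weighted covariance of $\tau_i$ and $G^{(j)}_i$ and their two weighted variances — and therefore the two weighted correlations agree. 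Substituting the empirical analogues (the within-$G^{(j)}$ weighted difference-in-means for the covariance piece, available because treatment is randomized inside the experimental sample, together with the weighted variance estimates) produces the stated plug-in estimator $\hat R^2(j) = \widehat{\cor}_w(\tau_i, G^{(j)} \mid S_i = 1)^2$.

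The hard part will be the normalization in that last substitution: $\tau_i$ is never observed at the unit level, so while the weighted covariance of $\tau_i$ and $G^{(j)}_i$ is point-identified from $(Y_i, T_i, X_i)$, the denominator $\var_w(\tau_i \mid S_i = 1)$ is only sharply bounded, via the Fréchet--Hoeffding argument invoked after Theorem~\ref{thm:bias}. I would therefore make explicit that $\hat R^2(j)$ is evaluated with the same bound $\overline{\var}_w(\tau_i \mid S_i = 1)$ used elsewhere in the framework, so that it is a (lower) estimate of the benchmarked moderation strength rather than an exactly point-identified quantity. A secondary point to check carefully is that $V_i \indep X_i$ is precisely what licenses replacing ``reweight to the target population'' by ``reweight to the transportable population''; without it, the weighted experimental sample would not be a valid stand-in for $\{S_i = 0, V_i = 0\}$ and the benchmarked $\hat R^2(j)$ would fail to line up with $\cor(\tau_i, G^{(j)} \mid S_i = 0, V_i = 0)^2$.
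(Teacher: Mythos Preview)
Your proposal is correct and is essentially the argument the paper intends. In fact, the paper does not give a separate proof of this proposition in Appendix~\ref{app:proofs}; it treats the result as following directly from the definitions of $k_p$ and $k_\tau$ together with Lemmas~\ref{lem:unbiased} and~\ref{lemma:weighted_var}, which establish that weighted moments on the experimental sample recover moments on the transportable population $\{S_i = 0, V_i = 0\}$ under Assumption~\ref{assum:cond_ign} and $V_i \indep X_i$. Your identification of $\hat p(j)$ is immediate from $k_p = 1$, and your argument for $\hat R^2(j)$ --- passing from $k_\tau = 1$ to $\cor(\tau_i, G^{(j)} \mid S_i = 0, V_i = 0)^2$ and then invoking the reweighting lemmas to replace this by the weighted correlation on the experimental sample --- is precisely the intended route, just spelled out more fully than the paper does.

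Your two caveats are also well placed. The observation that $V_i \indep X_i$ is what lets the standard IPW weights stand in for the transportable weights is exactly the content of the last paragraph of the proof of Lemma~\ref{lem:unbiased}. And your point that $\var_w(\tau_i \mid S_i = 1)$ in the denominator of $\widehat{\cor}_w$ is only sharply bounded (not point-identified), so that $\hat R^2(j)$ should be evaluated with the same Fr\'echet--Hoeffding bound $\overline{\var}_w(\tau_i \mid S_i = 1)$ used elsewhere, is a genuine clarification that the paper leaves implicit.
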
 
$\hat p(j)$ represents the proportion of units excluded from the transportable population, if subgroup $j$ were omitted. $\hat R^2(j)$ represents the amount of moderation that would occur if the units excluded from the transportable population moderated the treatment effect as much as membership in group $G^{(j)}$. $\hat R^2(j)$ is computed by first estimating the relationship between $\tau_i$ and the subgroup membership across the experimental sample, and then re-weighting to the target population. 

The benchmarked parameters simultaneously account for the moderating strength of the omitted subgroup $j$, as well as any correlation that exists between the other covariates $\bX$ and the omitted subgroup. In settings when researchers are interested in evaluating the sensitivity parameters for an overlap violation with greater (or less) strength than an observed subgroup, they can relax the assumption of $k_p = k_\tau = 1$. 

\paragraph{Relative Summary Measures.}
With the benchmarked parameters $\{\hat p(j), \hat R^2(j)\}$, researchers can then estimate the bias that would occur from an overlap violation with equivalent strength to omitting subgroup $G^{(j)}$. A useful measure to also calculate is the \textit{minimum relative overlap bias} that would be needed to reduce the estimated T-PATE to $b^*$: 
$$\text{MROB}(j) = \frac{\hat \tau - b^*}{\text{Bias}(\hat \tau \mid \hat p(j), \hat R^2(j))}.$$
MROB provides another way to quantify how the relative strength of an overlap violation that is needed to overturn a research result. If $\text{MROB}(j) > 1$, this implies that the overlap violation from omitting units outside of the transportable population must be stronger than omitting units from subgroup $G^{(j)}$ to overturn our research result. In contrast, if $\text{MROB}(j) < 1$, then this implies that an overlap violation need not be as strong as the effects from omitting units from a subgroup to overturn the research result. 

\paragraph{Calibrating the Sign of the Bias.}
In practice, the direction of the bias matters. For example, if omitting certain units results in the T-PATE increasing in value, then researchers may be worried that the estimated T-PATE underestimates the true impact of the treatment effect; however, an overlap violation of this nature will not result in a change in our research conclusion. To estimate whether omitting units like an observed subgroup result in an under or overestimation of the T-PATE, researchers can check the sign of $\widehat{\cor}_w(\tau_i, G_i^{(j)} \mid S_i = 1)$. If $\widehat{\cor}_w(\tau_i, G_i^{(j)} \mid S_i = 1) > 0$, then this implies that individuals in the subgroup $G^{(j)}$ have a larger treatment effect than individuals outside of the subgroup. As a result, if the omitted subgroup is similar to the subgroup $G^{(j)}$, we would expect that the overlap violation would result in an underestimation of the estimated treatment effect. In contrast, if $\widehat{\cor}_w(\tau_i, G^{(j)} \mid S_i = 1) < 0$, then this implies that the individuals in subgroup $G^{(j)}$ have a lower treatment effect than individuals outside of the subgroup. As a result, omitting similar units would result in an overestimation of the estimated treatment effect. 

\subsubsection{Illustration on Running Example}
We estimate the benchmarked sensitivity parameters across the running example. We use the observed covariates that are included in the estimated weights to construct the subgroups of interest. The full benchmarking results are provided in Table \ref{tbl:benchmark}. 

From the benchmarking results, we see that certain subgroups result in the largest moderation. In particular, the subgroups with the largest amount of treatment effect heterogeneity are individuals who are young (i.e., individuals younger than 18) and have lower amounts of educational attainment (i.e., years of education is less than 9). 

We see that the minimum relative overlap bias (MROB) values needed to overturn the research result for cash earnings are all relatively lower than the MROB values for hours of vocational training. This is due to two factors. The first is that the estimated T-PATE for cash earnings is inherently lower than the estimated T-PATE for hours of vocational training. As such, less bias is needed to reduce the estimate to zero. However, we also see that the benchmarked $\hat R^2(j)$ values are larger across a few key covariates. In other words, there is a greater degree of moderation that occurs for lower income, young, less educated individuals when receiving a cash transfer on cash earnings, than for hours of vocational training. 

\section{Conclusion} 
In this paper, we introduced a framework for researchers to consider overlap violations when generalizing or transporting their estimated effects. We proposed a novel bias decomposition, which allows researchers to parameterize the bias from an overlap violation with two parameters that are standardized and bounded. Furthermore, the bias decomposition can be flexibly applied across a variety of estimation approaches. We introduced a set of sensitivity tools that researchers can leverage to help understand the plausibility of an overlap violation overturning their research results. These tools are motivated by methods introduced for the omitted variable bias literature, but specifically address the overlap setting. We provide summary measures for researchers to quantify the degree of moderation that must be present from omitting a subset of units, as well as a benchmarking approach, which allows researchers to transparently incorporate their substantive expertise into the sensitivity analysis. 

This work motivates several lines of future research. First, recent work in the observational causal inference literature has introduced the notion of design sensitivity for weighted estimators \citep[e.g.,][]{huang2023design}. Design sensitivity allows researchers to assess different design and estimation decisions \textit{a priori}, with the goal of minimizing sensitivity to potential bias. A natural extension of this paper would be to leverage the proposed sensitivity analysis for overlap violations to introduce a design sensitivity framework for external validity. This would allow researchers to assess the trade-offs they incur in sensitivity to potential bias from using different sampling schemes or experimental designs. Second, the sensitivity framework highlights the importance of understanding, especially at the recruitment stage of an experiment, what covariates moderate the treatment effect. Being able to understand the drivers of treatment effect heterogeneity is crucially important in being able to design an externally valid experiment. 

\clearpage 
\bibliographystyle{chicago} 
\bibliography{bibliography}
\clearpage
\appendix 
\section{Extensions and Additional Discussion} \label{app:extra_details}
\subsection{Outcome Modeling}
Consider the setting in which researchers estimate an outcome model to predict $\tau_i$. Because the model must be estimated across the experimental sample and then used to project into the target population, this implies that at best, researchers will be able to estimate $\tau(X_i) := \E(\tau_i \mid V_i = 0, S_i =1, X_i) = \E(\tau_i \mid S_i =1, X_i)$. Thus, intuitively, the bias from omitting units in $V_i = 1$ from the experimental sample will depend on the relationship between $V_i$ and the part of $\tau_i$ that cannot be explained by $\tau(X_i)$. In particular, we can apply Theorem \ref{thm:bias} to the residuals of the outcome model (i.e., $\tau_i - \tau(X_i)$). The bias will thus depend on how much $V$ can explain the residual variation in $\tau_i - \tau(X_i)$ (i.e., the portion of the individual-level treatment effect that cannot be explained by the outcome model).

While estimating an outcome model can allow researchers to project beyond the convex hull in settings where they have missing covariate support in $X_i$, it does not help account for overlap violations, in which researchers have failed to account for a subgroup in the experimental sample. In particular, because the outcome model must be estimated using the experimental sample data, the outcome model cannot `learn' the relationship between the omitted subgroup and the individual-level treatment effect.

\subsection{Extended Discussion on $C_\sigma$} \label{app:c_sigma}
As described in the main manuscript, $C_\sigma$ controls the relative amount of unmeasured heterogeneity that is present in the omitted units, relative to the included units. One way researchers can calibrate potential values of $C_\sigma$ is to estimate an individual-level treatment effect model across the experimental sample, and use the model to project into the target population. Common examples of models used to estimate treatment effect heterogeneity include regression-based approaches, causal forests \citep{athey2019generalized}, Bayesian Additive Regression Trees (BART) \citep{hill2011bayesian}, etc. (See \citet{kern2016assessing} for more discussion.) With the estimated $\hat \tau(X_i)$, researchers can then compare the variances across different benchmarking subgroups (i.e., $\var(\hat \tau(X_i) \mid G^{(j)})$, where $G^{(j)}$ is defined in Section \ref{sec:benchmark}). 

This can help provide a  useful benchmark to anchor plausibility arguments for $C_\sigma$ values. It is worth cautioning that the calibration will also depend on underlying model used to estimate the individual-level treatment effect. For this reason, we suggest researchers set $C_\sigma = 1$ for minimum reporting, and then vary $C_\sigma$ values. 

\subsubsection{Example: Simulated Data Generating Processes}
To help visualize the impact of changing $C_\sigma$, we numerically simulate some examples. We leverage the population-level linear decomposition introduced in Section \ref{sec:bias} (i.e., Equation \eqref{eqn:tau_lin_decomp}), and set $\hat \alpha = 2$. We simulate $u \sim N(0, \sigma^2(V))$, and define $\sigma^2(V)$ as 
$$\sigma^2(V) := \begin{cases} 1 &\text{if } V = 0\\ 
\sigma^2_1 & \text{if } V = 1.
\end{cases} $$
Because $C_\sigma$ measures the relative variance across the units excluded from the transportable population and we have set $\sigma^2(0) = 1$, $\sigma^2_1$ maps to $C_\sigma$ values. We vary $\sigma^2_1 \in \{0.5, 1, 4\}$, and $\gamma \in \{1.75, 2, 2.5\}$, and set the proportion of omitted units $p = 0.25$. 

Figure \ref{fig:c_sigma_example} visualizes the distribution of $\tau_i$, between the included and excluded groups. We see that for the first scenario (i.e. $C_\sigma = 0.25$), the variation of $\tau_i$ across the omitted units is lower than the included units. In contrast, when $C_\sigma = 4$, there is considerable more variation in the individual-level treatment effect of the omitted units. 

\subsection{Complete Overlap Violation ($p=1$)} \label{app:p1_case} 
In settings in which $p=1$, then this implies that there are \textit{no} units in the target population that are represented by the units in the experimental sample. This means that $V_i = 1$ for all units in the target population. When this is the case, $p = 1$, and $R^2_{\tau \sim V} = 0$. This is because $R^2_{\tau \sim V} = \cor(\tau_i, V_i \mid S_i = 0)^2 = 0$, when $V_i = 1$ for all units $S_i = 0$. As a result, the bias decomposition introduced in Theorem \ref{thm:bias} cannot be directly applied. To evaluate the bias from an overlap violation in which the target population is completely different from the experimental sample, researchers would have to reason about how much larger or smaller the true T-PATE is, relative to the estimated T-PATE. It is worth noting that such a setting would imply that there is a fixed effect from being in the experimental sample, or from the contextual shift between the experimental sample and the target population. If researchers believe this to be the case, it is likely that external validity is limited. 

\subsection{Observable Overlap Violations} \label{sec:baseline}
In the main manuscript, we have assumed that $V_i$ is unknown or latent, in many practical settings, researchers may have \textit{observed} overlap violations. In the following section, we detail how researchers can use available data to estimate a \textit{baseline overlap error}. Furthermore, in settings when we have observed overlap violations, the sensitivity analysis can be simplified to a single parameter sensitivity analysis by fixing $p$ and varying the $R^2_{\tau \sim V}$ value. We propose a summary measure, \textit{minimum variation explained}, which summarizes the threshold variation $V_i$ would have to explain in $\tau_i$ at the target population level, given the baseline overlap error. 

\subsubsection{Estimating Baseline Overlap}
There are often settings in which researchers can directly observe overlap violations between the experimental sample and the target population. In such settings, researchers can directly estimate the proportion of units omitted from the target population. We refer to this as the \textit{baseline overlap error}. We detail two example settings below

\paragraph{Example: Observed Overlap Violations in Pre-Treatment Covariates.} In settings when researchers have access to individual-level demographic data at the target population level, researchers can directly check whether or not there is observable overlap violations across the pre-treatment covariates between the experimental sample and the target population.\footnote{In settings when researchers are using census data, this may not be possible if researchers only have access to summary statistics (i.e., first-order moments) associated with the pre-treatment covariates.} As a simple example, consider the cash transfer setting. The experimental sample consists of individuals between the ages of 14 to 59. In contrast, about 54\% of individuals across the target population are outside this age range. As such, a lower bound for $p$ can naturally be set at 54\%. Researchers can also check for overlap violations in higher order interactions between key covariates. For example, a conservative approach is to perform exact matching between the experimental sample and the target population and assess the proportion of units in the target population cannot be matched to units in the experimental sample. 

\paragraph{Example: Attrition Rate.} In settings when researchers about attrition as a form of overlap violation, they can estimate the attrition rate across the target population, which can serve as a baseline lower bound for $p$. For example, in the cash transfer context, the estimated attrition rate across the target population is 27\%.\\

The baseline overlap error serves as a lower bound for $p$. As such, a low baseline overlap error does not necessarily imply that the true proportion of units omitted from the target population is low, as the overlap violation could occur across latent moderators that are not measured. However, a high baseline overlap error does provide a warning that the true proportion of units omitted from the target population is likely also high. 

\subsubsection{Minimum Variation Explained}
With an estimate of baseline overlap error, researchers can then compute the minimum variation an omitted subgroup must explain in the individual-level treatment effect in order to reduce the T-PATE to $b^*$. We call this the \textit{minimum variation explained} (MVE), formally defined as: 
$$\text{MVE}_{b^*}(p) = \frac{f(p)}{1+f(p)} \text{ where } f(p) = \frac{(\hat \tau - b^*)^2 \cdot \left(1 + C_\sigma \cdot \frac{p}{1-p}\right)}{\var_w(\tau_i \mid S_i = 1) \cdot p }.$$
The $\text{MVE}_{b^*}$ represents the minimum amount of moderation that must occur from omitting the subgroup $V_i$, given the baseline proportion of units omitted, for the T-PATE to be reduced to $b^*$. Like the ORV, the MVE is restricted to a range of 0 to 1.

\section{Proofs and Derivations} \label{app:proofs} 
\subsection{Helpful Lemmas}
\begin{lemma}[Validity of Transportable Weights] \label{lem:unbiased}\mbox{}\\
Define the set of transportable weights (i.e., Equation \eqref{eqn:transportable_weights}): 
\begin{equation} 
w(X_i, V_i = 0) := \frac{P(S_i= 1)}{P(S_i = 0 \mid V_i = 0)} \cdot \frac{P(S_i = 0 \mid X_i, V_i = 0)}{P(S_i =1 \mid X_i, V_i=0)}.
\label{eqn:transportable_weights}
\end{equation} 
Then: 
$$\E(w(X_i, V_i = 0) \cdot \tau_i \mid S_i = 1) = \E(\tau_i \mid V_i = 0, S_i = 0).$$
\end{lemma}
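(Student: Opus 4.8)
The plan is to verify this directly by a change-of-measure computation, showing that the transportable weights reweight the experimental sample (restricted, by assumption, to $V_i = 0$ units) so that the reweighted outcome distribution matches that of the transportable population. First I would note that since $V_i = 0$ for all units in the experimental sample, conditioning on $S_i = 1$ is the same as conditioning on $S_i = 1, V_i = 0$, so $\E(w(X_i, V_i=0) \cdot \tau_i \mid S_i = 1) = \E(w(X_i, 0)\,\tau_i \mid S_i = 1, V_i = 0)$. Then I would write this as an iterated expectation over $X_i$: condition on $X_i$ inside, pulling $w(X_i, 0)$ out since it is $X_i$-measurable, and invoke Assumption~\ref{assum:cond_ign} (conditional ignorability) to replace $\E(\tau_i \mid S_i = 1, V_i = 0, X_i)$ with $\E(\tau_i \mid S_i = 0, V_i = 0, X_i)$ — this is the step that uses that $\{X, V\}$ is a separating set.

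Next I would handle the change of measure in the outer expectation over $X_i$. The outer average is over the conditional law of $X_i$ given $S_i = 1, V_i = 0$, and I want to convert it to the conditional law of $X_i$ given $S_i = 0, V_i = 0$. By Bayes' rule, the density ratio $\frac{dP(X_i \mid S_i = 0, V_i = 0)}{dP(X_i \mid S_i = 1, V_i = 0)}$ equals $\frac{P(S_i = 0 \mid X_i, V_i = 0)}{P(S_i = 1 \mid X_i, V_i = 0)} \cdot \frac{P(S_i = 1 \mid V_i = 0)}{P(S_i = 0 \mid V_i = 0)}$. So I would need to check that the normalizing constant in $w(X_i, 0)$, namely $\frac{P(S_i = 1)}{P(S_i = 0 \mid V_i = 0)}$, together with the odds ratio, produces exactly this density ratio. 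There is a slight mismatch to reconcile: $w$ has $P(S_i = 1)$ (unconditional) in the numerator of its leading factor, whereas the Bayes density ratio wants $P(S_i = 1 \mid V_i = 0)$. This should be fine because $V_i = 0$ for every unit with $S_i = 1$, so $\{S_i = 1\} \subseteq \{V_i = 0\}$ and hence $P(S_i = 1 \mid V_i = 0) = P(S_i = 1) / P(V_i = 0)$; I would need to track the $P(V_i = 0)$ factors and confirm they cancel (the denominator $P(S_i = 0 \mid V_i = 0)$ presumably absorbs it, or the identity holds because the $V_i=0$ conditioning on the $S_i=1$ side is vacuous). Once the leading constant times the odds factor is shown to equal the Bayes density ratio, the change of measure goes through and the outer expectation becomes $\E\big(\E(\tau_i \mid S_i = 0, V_i = 0, X_i) \mid S_i = 0, V_i = 0\big) = \E(\tau_i \mid S_i = 0, V_i = 0)$ by the tower property.

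The main obstacle I expect is precisely the bookkeeping around the normalization constant and the $\{S_i = 1\} \subseteq \{V_i = 0\}$ containment — making sure the various $P(V_i = 0)$, $P(S_i = 1)$, $P(S_i = 1 \mid V_i = 0)$ factors are handled consistently so that the claimed weight definition, and not some slightly rescaled version, is the one that yields an exactly unbiased (not merely proportional) estimator. Everything else is the standard inverse-propensity-weighting identity argument: conditional ignorability to swap the outcome regression across $S$, and Bayes' rule for the covariate-distribution change of measure. A clean way to organize it is to first prove the ``target'' version (weights mapping $S_i=1$ to $S_i=0$ with no $V$ at all, which is the classical result) and then observe that restricting everything to $V_i = 0$ and noting the experimental sample already lives in $V_i = 0$ gives the stated lemma almost immediately.
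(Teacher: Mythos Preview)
Your approach---iterated expectation over $X_i$, invoking Assumption~\ref{assum:cond_ign} to swap $\E(\tau_i \mid S_i=1, V_i=0, X_i)$ for $\E(\tau_i \mid S_i=0, V_i=0, X_i)$, and then Bayes' rule to change the outer measure on $X_i$---is exactly the route the paper takes. The paper writes it as a sum over the joint law of $(X_i,\tau_i)$ rather than as a tower of conditional expectations, but the mechanics are identical.

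The one place you hedge is precisely the place that deserves care, and your instinct is right: the $P(V_i=0)$ factors do \emph{not} cancel. The Bayes density ratio you need is $\frac{P(S_i=0\mid X_i,V_i=0)}{P(S_i=1\mid X_i,V_i=0)}\cdot\frac{P(S_i=1\mid V_i=0)}{P(S_i=0\mid V_i=0)}$, whereas the stated weight has $P(S_i=1)$ in the numerator of the leading constant. Since $\{S_i=1\}\subseteq\{V_i=0\}$ gives $P(S_i=1\mid V_i=0)=P(S_i=1)/P(V_i=0)$, the weight as written equals $P(V_i=0)$ times the correct density ratio, so the argument actually yields $\E(w\,\tau_i\mid S_i=1)=P(V_i=0)\cdot\E(\tau_i\mid V_i=0,S_i=0)$. (A quick sanity check with degenerate $X$ confirms this.) The paper's proof obscures the discrepancy by writing $P(S_i=1,V_i=0)$ in the denominator of its Bayes step where $P(S_i=1\mid V_i=0)$ belongs, and then silently replacing it with $P(S_i=1)$. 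None of this matters downstream---the paper immediately notes that the weights researchers actually use are only \emph{proportional} to $w(X_i,V_i=0)$, and the estimators are self-normalized---but if you want the exact identity as stated, the leading constant should be $\tfrac{P(S_i=1\mid V_i=0)}{P(S_i=0\mid V_i=0)}$.
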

\begin{proof}
We will first show that the weighted estimator will be an unbiased estimator for the average treatment effect, across the transportable population. 
\begin{align*} 
\E&(w(X_i, V_i = 0) \tau_i \mid S_i = 1) \\
&=\sum_{x \in X}  w(X_i = x, V_i = 0) \tau_i P(X_i = x, \tau_i \mid S_i = 1)\\
&=\sum_{x \in X}  w(X_i = x, V_i = 0) \tau_i P(X_i = x, \tau_i \mid S_i = 1, V_i = 0)\\
&= \sum_{x \in X} w(X_i = x, V_i = 0) \tau_i \cdot \frac{P(S_i = 1 \mid X_i = x, \tau_i, V_i = 0)\cdot P(X_i, \tau_i \mid V_i = 0)}{P(S_i = 1, V_i = 0)}
\intertext{By conditional ignorability:}
&= \sum_{x \in X} w(X_i = x, V_i = 0) \tau_i \cdot \frac{P(S_i = 1 \mid X_i = x, V_i = 0)\cdot P(X_i, \tau_i \mid V_i = 0)}{P(S_i = 1)}
\intertext{Substituting in the definition of $w(X_i = x, V_i = 0)$:}
&=\sum_{x \in X} \frac{P(S_i= 1)}{P(S_i = 0 \mid V_i = 0)} \cdot \frac{P(S_i = 0 \mid X_i = x, V_i = 0)}{P(S_i =1 \mid X_i = x, V_i=0)} \tau_i \cdot \frac{P(S_i = 1 \mid X_i = x, V_i = 0)\cdot P(X_i, \tau_i \mid V_i = 0)}{P(S_i = 1)} \\
&= \sum_{x \in X} \tau_i \frac{P(S_i = 0 \mid X_i = x, V_i = 0)}{P(S_i = 0 \mid V_i = 0)} P(X_i, \tau_i \mid V_i = 0)\\
&= \sum_{x \in X} \tau_i \frac{P(S_i = 0 \mid \tau_i, X_i = x, V_i = 0)}{P(S_i = 0 \mid V_i = 0)} P(X_i, \tau_i \mid V_i = 0)\\
&\equiv \E(\tau_i \mid S_i = 0, V_i = 0)
\end{align*} 
In practice, researchers estimate standard inverse propensity score weights $w(X_i)$, and not the transportable weights $w(X_i, V_i = 0)$. However, under the assumption that $X_i \indep V_i$, $w(X_i)$ is proportional to the transportable weights (defined in Equation \eqref{eqn:transportable_weights}:
\begin{align*} 
w(X_i, V_i = 0) &= \frac{P(S_i= 1)}{P(S_i = 0 \mid V_i = 0)} \cdot \frac{P(S_i = 0 \mid X_i, V_i = 0)}{P(S_i =1 \mid X_i, V_i=0)} \propto w(X_i).
\end{align*}

\end{proof}

\begin{lemma}[Validity of Weighted Variance Estimate] \label{lemma:weighted_var} \mbox{}\\
Define $w_i$ as given in Equation \eqref{eqn:transportable_weights}. Furthermore, define $\var_w(A_i)$ as the weighted variance of a variable $A_i$, defined formally as follows: 
$$\var_w(A_i) := \sum_{i:S_i = 1} w_i (A_i - \bar A)^2.$$
Then, $\E[\var_w(A_i)] = \var(A_i \mid S_i = 0, V_i = 0)$. 
\end{lemma}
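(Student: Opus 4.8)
The plan is to reduce the claim to the moment‑transfer identity already established in Lemma~\ref{lem:unbiased}. That lemma is stated only for $\tau_i$, but inspecting its proof shows that the sole ingredients are (i) the explicit form of the transportable weights $w_i$ in Equation~\eqref{eqn:transportable_weights} and (ii) conditional ignorability of the averaged quantity given $(X_i, V_i)$; consequently the identical chain of equalities gives, for \emph{any} measurable $f$,
\begin{equation*}
\E\big[w_i\, f(\tau_i, X_i) \mid S_i = 1\big] = \E\big[f(\tau_i, X_i) \mid S_i = 0, V_i = 0\big].
\end{equation*}
First I would record three instances of this: $f \equiv 1$ yields $\E[w_i \mid S_i = 1] = 1$; $f = A_i$ yields $\E[w_i A_i \mid S_i = 1] = \E[A_i \mid S_i = 0, V_i = 0] =: \mu$; and $f = A_i^2$ yields $\E[w_i A_i^2 \mid S_i = 1] = \E[A_i^2 \mid S_i = 0, V_i = 0]$.

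Next I would expand the weighted sum of squares around the weighted mean $\bar A := \sum_{i:S_i = 1} w_i A_i$. Using that the transportable weights are normalized to sum to one over the experimental sample,
\begin{equation*}
\var_w(A_i) = \sum_{i:S_i = 1} w_i (A_i - \bar A)^2 = \sum_{i:S_i = 1} w_i A_i^2 \; - \; \bar A^2 .
\end{equation*}
Taking expectations of both sides and substituting the three identities above then collapses the right‑hand side to $\E[A_i^2 \mid S_i = 0, V_i = 0] - \mu^2 = \var(A_i \mid S_i = 0, V_i = 0)$, which is the assertion.

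The hard part will be the term $\E[\bar A^2]$: since $\bar A$ is itself an average over the random experimental sample, $\E[\bar A^2]$ differs from $\mu^2$ by a finite‑sample term of order $1/n$, so the cross term generated by squaring does not vanish exactly. I would resolve this either by phrasing the lemma at the idealized population level --- where $\bar A$ is the fixed functional $\E[w_i A_i \mid S_i = 1] = \mu$ and the issue disappears --- or by noting that the correction is asymptotically negligible, in keeping with how estimators are treated elsewhere in the paper; a one‑line statement of which normalization of $w_i$ is in force is also needed to make the expansion above an exact identity rather than an approximate one. By contrast, the step extending Lemma~\ref{lem:unbiased} from $\tau_i$ to $A_i$ and $A_i^2$ is routine, because conditional ignorability of $(\tau_i, X_i)$ given $(X_i, V_i)$ automatically transfers to any function of $(\tau_i, X_i)$, and the weight‑manipulation steps are unchanged.
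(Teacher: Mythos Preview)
Your proposal is correct and follows the same route as the paper, whose entire proof is the one-line remark that the result ``follows immediately from Lemma~\ref{lem:unbiased}.'' Your write-up is in fact more careful than the paper's: you make the moment-transfer step explicit and you correctly flag the finite-sample discrepancy between $\E[\bar A^2]$ and $\mu^2$, which the paper silently resolves by treating the identity at the population level.
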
 
\begin{proof} 
The results of this lemma follow immediately from Lemma \ref{lem:unbiased}.
\end{proof} 
\subsection{Proof of Theorem \ref{thm:bias}}
\begin{proof} 

Using Definition \ref{def:overlap}, we can generally decompose the error in recovering the T-PATE into two parts: 
\begin{align} 
\begin{split}
\E(\hat \tau)-\tau &=\underbrace{\E(\hat \tau) - \E(Y_i(1) - Y_i(0) \mid S_i = 0, V_i = 0)}_{\text{(a) Error in Recovering ATE across Transportable Population}} \\
&~~~~+\underbrace{\E(Y_i(1) - Y_i(0) \mid S_i = 0, V_i = 0) - \E(Y_i(1) - Y_i(0) \mid S_i = 0)}_{\text{(b) Gap between Transportable and Target Population}}. 
\end{split}
\label{eqn:error}
\end{align}

The error from generalizing (or transporting) experimental results will arise from (1) error from recovering the average treatment effect across the transportable population (i.e., Equation \ref{eqn:error}-(a)), and (2) error from slippage between the transportable and target population. By assumption, the first source of error (Equation \eqref{eqn:error}-(a)) is zero. Thus, we will focus on the second error (Equation \eqref{eqn:error}-(b)).

From Equation \eqref{eqn:error}-(b): 
\begin{equation} \underbrace{\big( \E(Y_i(1) - Y_i(0) \mid S_i = 0, V_i = 0) - \E(Y_i(1) - Y_i(0) \mid S_i = 0, V_i = 1) \big)}_{(*)} \cdot P(V_i = 1 \mid S_i = 0)
\label{eqn:overlap_error}
\end{equation}
The derivation proceeds in two parts. In the first part, we will re-write Equation \ref{eqn:overlap_error} as a function of an $R^2$ value (i.e., $R^2_{\tau \sim V}$), the proportion of units omitted (represented by $p$), and the variance of $\tau_i$ across the target population. In the second part, we will show that the variance of $\tau_i$ across $S_i = 0$ can be re-written as a function of the treatment effect heterogeniety across the experimental sample (i.e., $\var(\tau_i \mid S_i = 1)$) and the $R^2$ value. 

\noindent \underline{\textbf{Part 1.}} To begin, we note the following:
\begin{align} 
\cov&(\tau_i, V_i \mid S_i = 0) \nonumber \\
=& \E(\tau_i V_i \mid S_i = 0) - \E(\tau_i \mid S_i = 0) \E(V_i \mid S_i = 0) \nonumber \\
=& \E(\tau_i V_i \mid S_i = 0, V_i = 1) \E(V_i\mid S_i = 0) - \E(\tau_i \mid S_i = 0) \E(V_i \mid S_i = 0) \nonumber \\
=& \E(\tau_i V_i \mid S_i = 0, V_i = 1) \E(V_i \mid S_i = 0)  \nonumber \\
&- \big(\E(\tau_i \mid S_i = 0, V_i = 1) \E(V_i \mid S_i = 0)^2 + \E(\tau_i \mid S_i = 0, V_i = 0) \E(V_i \mid S_i = 0) \E(1-V_i \mid S_i = 0) \big) \nonumber \\
=&\big( \E(\tau_i \mid S_i = 0, V_i = 1) - \E(\tau_i \mid S_i = 0, V_i = 0)\big) \var(V_i \mid S_i = 0)\nonumber 
\intertext{Thus, we can re-arrange the terms:}
\implies& \E(\tau_i \mid S_i = 0, V_i = 1) - \E(\tau_i \mid S_i = 0, V_i = 0) = \frac{\cov(\tau_i, V_i \mid S_i = 0)}{\var(V_i \mid S_i = 0)}
\label{eqn:gamma} 
\end{align} 
A useful duality that we will exploit is that we can equivalently consider the following linear decomposition of $\tau_i$: 
\begin{equation} 
\tau_i = \hat \alpha + \hat \gamma V_i + u_i,
\end{equation} 
where $\hat \alpha := \E(\tau_i \mid V_i = 0, S_i = 0)$, and $\hat \gamma$ is equivalent to Equation \eqref{eqn:gamma}. 

Then, we can rewrite Equation \eqref{eqn:overlap_error}-(a) as: 
\begin{align} 
\E&(Y_i(1) - Y_i(0) \mid S_i = 0, V_i = 0) - \E(Y_i(1) - Y_i(0) \mid S_i = 0, V_i = 1)\nonumber \\ 
&= \frac{\cov(\tau_i, V_i \mid S_i = 0)}{\var(V_i \mid S_i = 0)}\nonumber \\
&= \cor(\tau_i, V_i \mid S_i = 0) \cdot \sqrt{\frac{\var(\tau_i \mid S_i = 0)}{\var(V_i \mid S_i = 0)}}\nonumber \\
&= R_{\tau \sim V} \sqrt{\frac{\var(\tau_i \mid S_i = 0)}{p(1-p)}},
\label{eqn:bias_decomp_1}
\end{align} 
where we have defined $R^2_{\tau \sim V} := \cor(\tau_i, V_i \mid S_i = 0)^2$, and $p := \E(V_i \mid S_i = 0)$.\\

\noindent \underline{\textbf{Part 2.}} We will now show that $\var(\tau_i \mid S_i = 0)$ can be decomposed as $\frac{C_\sigma \cdot \var(\tau_i \mid S_i = 1)}{1-R^2_{\tau \sim V}}$.

First, we can re-write the variance of the noise term $u_i := \tau_i - \hat \alpha - \hat \gamma V_i$ as: 
\begin{align*}
\var(u_i \mid S_i = 0) =& \var(u_i \mid V_i = 1, S_i = 0) P(V_i = 1 \mid S_i = 0) + \var(u_i \mid V_i = 0\mid S_i = 0) P(V_i = 0\mid S_i = 0) + \\
&\E(u_i \mid V_i = 1, S_i = 0)^2 \var(V_i \mid S_i = 0) + \E(u_i \mid V_i = 0, S_i = 0)^2 \var(V_i \mid S_i = 0)\\
&- 2 \E(u_i \mid V_i = 1, S_i = 0) P(V_i = 1 \mid S_i = 0) \E(u_i \mid V_i = 0, S_i = 0) P(V_i = 0\mid S_i = 0) \\
=& \var(u_i \mid V_i = 1, S_i = 0) P(V_i = 1 \mid S_i = 0) + \var(u_i \mid V_i = 0\mid S_i = 0) P(V_i = 0\mid S_i = 0) + \\
&\underbrace{\left(\E(u_i \mid V_i = 1, S_i = 0) -\E(u_i \mid V_i = 0, S_i = 0)\right)^2}_{\equiv 0}\cdot \var(V_i \mid S_i = 0)\\
=& \var(u_i \mid V_i = 1, S_i = 0) P(V_i = 1 \mid S_i = 0) + \var(u_i \mid V_i = 0\mid S_i = 0) P(V_i = 0\mid S_i = 0) \\
=& \var(u_i \mid V_i = 1, S_i = 0) p + \var(u_i \mid V_i = 0\mid S_i = 0) (1-p),
\end{align*} 
which follows from Law of Total Variance.

This allows us to then decompose $\var(\tau_i \mid S_i = 0)$ as: 
\begin{align*}
\var(\tau_i \mid S_i = 0) &= \var(u_i \mid S_i = 0) + \var(\hat \gamma V_i \mid S_i = 0) \\
&= \var(u_i \mid V_i = 0, S_i = 0) (1-p) + \var(u_i \mid V_i = 1, S_i = 0) (1-p) + \var(\hat \gamma V_i \mid S_i = 0) 
\end{align*} 
Re-arranging the terms: 
\begin{align} 
1-R^2_{\tau \sim V} &= \frac{\var(u_i \mid V_i = 0, S_i =0) \cdot (1-p) + p \cdot \var(u_i \mid V_i = 1, S_i = 0)}{\var(\tau_i \mid S_i = 0)}\nonumber 
\intertext{Note that $\var(\tau_i \mid V_i = 0, S_i = 0) \equiv \var(u_i \mid V_i =0, S_i = 0)$. Furthermore, define $C_\sigma := \var(u_i \mid V_i = 1, S_i = 0)/\var(u_i \mid V_i = 0, S_i = 0)$. Then:}
&= \frac{\var(\tau_i \mid V_i = 0, S_i = 0) (1-p(1 - C_\sigma))}{\var(\tau_i \mid S_i =0)} \label{eqn:r2_decomp}
\intertext{By Lemma \ref{lemma:weighted_var}, $\var_w(\tau_i \mid S_i = 1) = \var(\tau_i \mid S_i = 0, V_i = 0)$:}
&= \frac{\var_w(\tau_i \mid S_i = 1) (1-p(1 - C_\sigma))}{\var(\tau_i \mid S_i =0)}\nonumber 
\end{align} 
Re-arranging the terms provides us with an expression of $\var(\tau_i \mid S_i = 0)$: 
\begin{equation} 
\var(\tau_i \mid S_i = 0)= \frac{(1-p(1-C_\sigma)) \cdot \var_w(\tau_i \mid S_i = 1)}{1-R^2_{\tau \sim V}}
\label{eqn:var_tau_decomp}
\end{equation} 

Substituting Equation \eqref{eqn:bias_decomp_1} and \eqref{eqn:var_tau_decomp} into Equation \eqref{eqn:overlap_error}, we arrive at Equation \ref{eqn:bias}: 
\begin{align*} 
\text{Bias}(\hat \tau) = \sqrt{\frac{R^2_{\tau \sim V}}{1-R^2_{\tau \sim V}} \cdot \left( 1 + C_\sigma \cdot \frac{p}{1-p}\right) p \cdot \var_w(\tau_i \mid S_i = 1)}.
\end{align*} 
\end{proof} 

\subsection{Proof of Proposition \ref{prop:r2_bound}}
Recall from Equation \eqref{eqn:r2_decomp}, $R^2_{\tau \sim V}$ can be written as follows: 
\begin{align*} 
R^2_{\tau \sim V} = 1 - \frac{\var(\tau_i \mid V_i = 0, S_i = 0) (1- p (1-C_\sigma))}{\var(\tau_i \mid S_i = 0)}
\end{align*} 
Then, since $C_\sigma \geq 0$, and $p \in [0,1]$, it follows that $R^2_{\tau \sim V}$ is upper bounded by: 
\begin{align*} 
R^2_{\tau \sim V} \leq 1 - \frac{\var(\tau_i \mid V_i = 0, S_i = 0)}{\var(\tau_i \mid S_i = 0)},
\end{align*} 
which is reached when $p = 0$. 

\subsection{Derivation of ORV}
The $\text{ORV}_{b^*}$ is defined as the minimum amount of variation $V$ must explain in $\tau_i$ across the target population \textit{and} the minimum proportion of units omitted, in order for the estimated treatment effect to be reduced to a threshold $b^*$. The derive the $\text{ORV}_{b^*}$, we begin with the bias formula: 
\begin{align*} 
\text{Bias}(\hat \tau) = \sqrt{\frac{R^2_{\tau \sim V}}{1-R^2_{\tau \sim V}} \cdot \left( 1 + C_\sigma \cdot \frac{p}{1-p}\right) p \cdot \var_w(\tau_i \mid S_i = 1)}.
\end{align*} 
Then, for the bias to be large enough to reduce the T-PATE to $b^*$, $\text{Bias}(\hat \tau) = \hat \tau - b^*$. We set $C_\sigma = 1$, and $R^2_{\tau \sim V} = p = \text{ORV}_{b^*}$:
\begin{align*}
\hat \tau - b^* =\sqrt{\frac{\text{ORV}_{b^*}}{1-\text{ORV}_{b^*}} \cdot  \frac{\text{ORV}_{b^*}}{1-\text{ORV}_{b^*}} \cdot \var_w(\tau_i \mid S_i = 1)}.
\end{align*} 
Squaring both sides and directly solving for $\text{ORV}_{b^*}$ recovers Equation \eqref{eqn:orv}.

\section{Additional Tables and Results}

\begin{table}[ht]
\footnotesize
\centering
\begin{tabular}{lcccccccccc}
  \toprule
  & \multicolumn{5}{c}{Hours of Vocational Training} & \multicolumn{5}{c}{Cash Earnings} \\ 
  \cmidrule(lr){2-6} \cmidrule(lr){7-11}
Covariate & $\hat R^2(j)$ & $\hat p(j)$ & Bias & MROB & Sign &  $\hat R^2(j)$ & $\hat p(j)$ & Bias & MROB & Sign\\ 
  \midrule
  Years of Education \\ 
$\quad$ Under 9 & 0.19 & 0.91 & 546.66 & 0.38 & $-$ & 0.36 & 0.91 & 79.12 & 0.09 & $-$ \\ 
$\quad$ Over 12 & 0.02 & 0.04 & 9.69 & 21.37 & + & 0.01 & 0.04 & 0.66 & 10.96 & + \\ 
Wealth\\
  $\quad$ Bottom 25\% & 0.01 & 0.26 & 19.09 & 10.85 & + & 0.03 & 0.26 & 3.58 & 2.02 & + \\ 
  $\quad$ Top 25\% & 0.02 & 0.25 & 29.67 & 6.98 & + & 0.02 & 0.25 & 2.42 & 2.98 & + \\ 
  Geographic Region \\ 
  $\quad$ Rural & 0.01 & 0.80 & 73.36 & 2.82 & $-$ & 0.00 & 0.80 & 3.71 & 1.95 & $-$ \\ 
  $\quad$ Urban & 0.01 & 0.20 & 18.58 & 11.15 & + & 0.00 & 0.20 & 0.94 & 7.69 & + \\ 
  Demographic Information \\ 
  $\quad$ Female & 0.00 & 0.50 & 21.52 & 9.62 & + & 0.08 & 0.50 & 9.87 & 0.73 & + \\ 
  $\quad$ Small Household & 0.09 & 0.15 & 47.51 & 4.36 & $-$ & 0.01 & 0.15 & 1.02 & 7.10 & + \\ 
  $\quad$ Large Household & 0.06 & 0.11 & 31.85 & 6.50 & + & 0.05 & 0.11 & 2.59 & 2.79 & $-$ \\ 
  Age Buckets\\
  $\quad$ Under 18 & 0.18 & 0.58 & 192.53 & 1.08 & $-$ & 0.50 & 0.58 & 38.49 & 0.19 & $-$ \\ 
  $\quad$ 18 to 25 & 0.02 & 0.13 & 20.44 & 10.14 & + & 0.71 & 0.13 & 19.57 & 0.37 & + \\ 
  $\quad$ 25 to 35 & 0.25 & 0.13 & 78.10 & 2.65 & + & 0.08 & 0.13 & 3.75 & 1.93 & + \\ 
  $\quad$ 36 to 50 & 0.00 & 0.09 & 4.20 & 49.31 & + & 0.00 & 0.09 & 0.38 & 18.90 & + \\ 
  $\quad$ 50 and above & 0.00 & 0.08 & 0.30 & 687.29 & + & 0.00 & 0.08 & 0.04 & 191.23 & + \\ 
  Interactions \\ 
  $\quad$ Rural \\
  $\quad$ $\quad$ $\times$ Bottom 25\% Wealth & 0.01 & 0.23 & 15.37 & 13.48 & + & 0.02 & 0.23 & 2.52 & 2.87 & + \\ 
  $\quad$ $\quad$ $\times$ $<$ 9 Yrs of Educ. & 0.06 & 0.74 & 151.90 & 1.36 & $-$ & 0.01 & 0.74 & 5.24 & 1.38 & $-$ \\ 
  $\quad$ $\quad$ $\times$ Under 18 & 0.12 & 0.46 & 123.04 & 1.68 & $-$ & 0.14 & 0.46 & 12.61 & 0.57 & $-$ \\ 
  $\quad$ Urban \\
  $\quad$ $\quad$ $\times$ Bottom 25\% Wealth & 0.00 & 0.03 & 2.39 & 86.79 & + & 0.02 & 0.03 & 0.72 & 10.06 & + \\ 
  $\quad$ $\quad$ $\times$ $<$9 Yrs of Educ. & 0.00 & 0.17 & 3.79 & 54.72 & $-$ & 0.10 & 0.17 & 4.82 & 1.50 & $-$ \\ 
  $\quad$ $\quad$ $\times$ Under 18 & 0.01 & 0.11 & 10.92 & 18.96 & $-$ & 0.21 & 0.11 & 6.08 & 1.19 & $-$ \\
   \bottomrule
\end{tabular}
\caption{Benchmarking Results. Sign refers to the direction that the estimated T-PATE would move if we omitted a subgroup as strong as the benchmarked group. For example, all $+$ groups indicate that the T-PATE would increase, if we included additional units. In contrast, all $-$ groups indicate the T-PATE would decrease.}
\label{tbl:benchmark}
\end{table}
\end{document}